\newcommand{\ra}[1]{\renewcommand{\arraystretch}{#1}}
\newcommand{\pluseq}{\mathrel{+}\mathrel{\mkern-2mu}=}
\journal{---}
\begin{document}

\begin{frontmatter}




\title{Contraction Clustering (RASTER):\\ A Very Fast Big Data Algorithm for Sequential and Parallel Density-Based Clustering in Linear Time, Constant Memory, and a Single Pass
}




\author{Gregor Ulm\corref{cor1}}
\ead{gregor.ulm@fcc.chalmers.se}
\cortext[cor1]{Corresponding author}

\author{Simon Smith\corref{cor2}}
\ead{simon.smith@fcc.chalmers.se}

\author{Adrian Nilsson\corref{cor2}}
\ead{adrian.nilsson@fcc.chalmers.se}

\author{Emil Gustavsson\corref{cor2}}
\ead{emil.gustavsson@fcc.chalmers.se}
\author{Mats Jirstrand\corref{cor2}}
\ead{mats.jirstrand@fcc.chalmers.se}


\address{Fraunhofer-Chalmers Research Centre for Industrial Mathematics, \\Chalmers Science Park, SE-412 88 Gothenburg, Sweden}
\address{Fraunhofer Center for Machine Learning,\\Chalmers Science Park, SE-412 88 Gothenburg, Sweden}

\begin{abstract}
Clustering is an essential data mining tool for analyzing and grouping similar objects. In big data applications, however, many clustering algorithms are infeasible due to their high memory requirements and/or unfavorable runtime complexity. In contrast, Cont\underline{ra}ction Clu\underline{ster}ing (RASTER) is a single-pass algorithm for identifying density-based clusters with linear time complexity. Due to its favorable runtime and the fact that its memory requirements are constant, this algorithm is highly suitable for big data applications where the amount of data to be processed is huge. It consists of two steps: (1) a contraction step which projects objects onto tiles and (2) an agglomeration step which groups tiles into clusters. This algorithm is extremely fast in both sequential and parallel execution. Our quantitative evaluation shows that a sequential implementation of RASTER performs significantly better than various standard clustering algorithms. Furthermore, the parallel speedup is significant: on a contemporary workstation, an implementation in Rust processes a batch of 500 million points with 1 million clusters in less than 50 seconds on one core. With 8 cores, the algorithm is about four times faster. 

\end{abstract}

\begin{keyword}



clustering
\sep unsupervised learning
\sep algorithms
\sep big data analytics
\sep machine learning

\end{keyword}

\end{frontmatter}


\section{Introduction}
The goal of clustering is to aggregate similar objects into groups in which objects exhibit similar characteristics. However, when attempting to cluster very large amounts of data, i.e.\ data in excess of $10^{12}$ elements~\cite{hathaway2006extending}, two limitations of many well-known clustering algorithms become apparent. First, they operate under the premise that all available data fits into main memory, which does not necessarily hold true in a big data context. Second, their time complexity is unfavorable. For instance, the frequently used clustering algorithms \textsc{DBSCAN}~\cite{ester1996density} runs in $\mathcal{O}(n \log{}n)$ in the best case, where $n$ is the number of elements the input data consists of. There are standard implementations that use a distance matrix, but its memory requirements of $\mathcal{O}(n^2)$ make big data applications infeasible in practice. In addition, the additional logarithmic factor is problematic in a big data context. Linear-time clustering methods exist~\cite{kumar2005linear, kumar2010linear}. Unfortunately, their large factors make them inapplicable to big data clustering. In practice, even an algorithm with a seemingly favorable time complexity can be unsuitable because a large factor cannot just be conveniently ignored and may instead mean that an algorithm is inadequate. Thus, complexity analysis using $\mathcal{O}$-notation can be deceptive as the measured performance difference between two algorithms belonging to the same complexity class can be quite high. Generally speaking, many common clustering algorithms are unable to handle comparatively modest amounts of data and may struggle mightily or even succumb to inputs consisting of just hundreds of clusters and thousands of data points.


In this paper, we introduce Contraction Clustering (RASTER).\footnote{The chosen shorthand may not be immediately obvious: RASTER operates on an implied grid. Resulting clusters can be made to look similar to the dot matrix structure of a \emph{raster graphics} image. Furthermore, the name RASTER is, using our later terminology, an agglomerated contraction of the words \emph{cont\underline{ra}ction} and \emph{clu\underline{ster}ing}. The name of this algorithm thus self-referentially illustrates how it contracts an arbitrary number of larger squares, so-called \emph{tiles}, to single points or, in this case, letters.} In the taxonomy presented in Fahad et al.~\cite{fahad2014survey}, it is a grid-based clustering algorithm. RASTER has been designed for big data. Its runtime scales linearly in the size of its input and has a very modest constant factor. In addition, it is able to handle cases where the available data do not fit into main memory because its memory requirements are constant for any fixed-size grid. The available data can therefore be divided into chunks and processed independently. The two distinctive phases of RASTER are parallelizable; the most computationally intensive first one trivially so, while the second one can be expressed in the divide-and-conquer paradigm of algorithm design. According to Shirkhorshidi et al.~\cite{shirkhorshidi2014big}, distributed clustering algorithms are needed for efficiently clustering big data. However, our algorithm is a counter example to that claim as it exhibits excellent performance metrics even in single-threaded execution on a single machine. It also effectively uses multiple CPU cores. The novelty of RASTER is that it is a straightforward, easy to implement, and extremely fast big data clustering algorithm with intuitive parameters. It requires only one pass through the input data, which it does not need to retain. In addition, key operations like projecting to a tile and neighborhood lookup are performed in constant time.

The remainder of this paper is organized as follows. In Sect.~\ref{sec:problem} we introduce the problem description, with a particular focus on the hub identification problem. This is followed by a presentation of the RASTER algorithm for batch-processing in Sect.~\ref{sec:raster}, which includes the parallel version P-RASTER. An evaluation follows in Sect.~\ref{raster-eval}. As the presentation of RASTER is done partly in a comparative manner, related work is pointed out when appropriate. However, further related work is discussed in Sect.~\ref{sec:related}. This is followed by future work in Sect.~\ref{sec:future} and a conclusion in Sect.~\ref{sec:conclusion}. The appendix contains a qualitative comparison of RASTER versus other common clustering methods in \ref{app-a}, a separate comparison of RASTER and CLIQUE in \ref{app-b}, correctness proofs for P-RASTER in \ref{app-c}, and a detailed analysis of the runtime of P-RASTER in \ref{app-d}.

This paper is a substantial revision of a previously published conference proceedings paper, which focused on a qualitative description of RASTER~\cite{rasterLOD}. In the current paper, we added a substantial quantitative component to evaluate RASTER as well as the new variant RASTER$'$. In addition, we describe and evaluate parallel implementations of these two algorithms.

\section{Problem Description}
\label{sec:problem}
In this section we give an overview of the clustering problem in Sect.~\ref{21}, describe the hub identification problem, which is the motivating use case behind RASTER, in Sect.~\ref{22}, and highlight limitations of common clustering methods in Sect.~\ref{23}.

\subsection{The Clustering Problem}
\label{21}
Clustering is a standard approach in machine learning for grouping similar items, with the goal of dividing a data set into subsets that share certain features. It is an example of unsupervised learning, which implies that there are potentially many valid ways of clustering data points. Elements belonging to a cluster are normally more similar to other elements in it than to elements in any other cluster. If an element does not belong to a cluster, it is classified as noise. An element normally belongs to only one cluster. However, fuzzy clustering methods~\cite{yang1993survey, pham2001spatial} can identify non-disjoint clusters, i.e.~elements may be part of multiple overlapping clusters. Yet, this is not an area we are concerned with in this paper. Instead, our focus is on problems that are in principle solvable by common clustering methods such as \textsc{DBSCAN}~\cite{ester1996density} or $k$-means clustering~\cite{macqueen1967some}.

\subsection{The Hub Identification Problem}
\label{22}


\begin{figure}[h]
\centering
\includegraphics[scale=0.50]{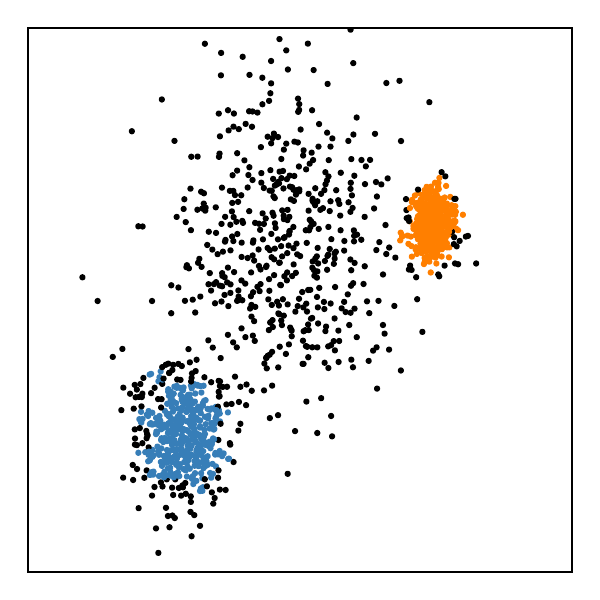}
\caption{This sample illustrates the hub identification problem, where the goal is to find dense clusters in a noisy data set. RASTER identifies two dense clusters and ignores the less dense points in the center.}
\label{fig:RASTER_artificial}
\end{figure}

The main use case RASTER addresses is solving the hub identification problem. Consider being given a vast data set of unlabeled telemetry data in the form of GPS coordinates that trace the movement of vehicles. The goal is to identify \emph{hubs} in vehicle transportation networks. More concretely, hubs are locations within a road network at which vehicles stop so that a particular action can be performed, e.g.~loading and unloading goods at warehouses, dropping of goods at delivery points, or passengers boarding and alighting at bus stops. Conceptually, a hub is a node in a vehicle transportation network, and a node is the center of a cluster. By knowing the location of hubs, as well as the IDs of vehicles that made use of any given hub, detailed vehicle transportation networks can be constructed from GPS data, and usage profiles of vehicles created. A particular feature of a hub is that it consists of a relatively large number of points in a relatively small area. We therefore also refer to them as tight, dense clusters. Generally, they cover an area that ranges from a few square meters to a few dozen square meters; the largest hubs cover a few hundred square meters. Hubs tend to also be a certain minimum distance apart from each other, which implies that if multiple hubs are identified that are close to each other, it is generally the result of a misclassification as they would form only one hub in reality. Figure~\ref{fig:RASTER_artificial}, which shows dense clusters in a noisy data set, illustrates what hubs may look like. 

The provided input to the hub identification problem are GPS coordinates, indicating their latitude and longitude. Their precision is defined by the number of place values after the decimal point. For instance, the coordinate pair $(40.748441, $ $-73.985664)$ specifies the location of the Empire State Building in New York City with a precision of $11.1$ centimeters. A seventh decimal place value would specify a given location with a precision of $1.1$ centimeters, while truncating to five decimal place values would lower precision to $1.1$ meters. High-precision GPS measurements require special equipment, but consumer-grade GPS is only accurate to within about ten meters under open sky~\cite{diggelenGPS}. Altogether, there are three reasons why a reduced precision will lead to practically useful clusters. In addition to the aforementioned imprecision of GPS equipment, one has to take into account vehicle size as well as the size of hubs. Commercial vehicles from where our data originate may have a length ranging from a few meters, in the case of cars or vans, to up to 25 meters or more, in the case of a truck with an attached trailer. Furthermore, hubs cover an area that can reasonably be expected to be much larger than the size of a vehicle. Thus, for the purpose of clustering, lower-precision GPS coordinates could be used, without losing a significant amount of information. In practice, a precision of one meter or less is often sufficient. The example just given may seem to imply that scaling, or reducing the precision, is based on powers of ten. However, this interpretation would not be correct as one could use an arbitrary real-number as a scaling factor as well.

The hub identification problem emerged from working with huge real-world data sets. A common approach of clustering algorithms is to return labels of all input data points, indicating to which cluster they belong. This is also the approach taken by one of our comparative benchmarks (cf.~Sect.\,\ref{raster-eval}). With another linear pass, clusters and their labels can be extracted. This is a viable approach for smaller data sets. However, one of our goals was also to make the amount of data easier to handle and, for instance, reduce $10^{12}$ points to a minuscule fraction of it that contains only the relevant information. Points that do not belong to a cluster are not of interest to our problem. Instead, RASTER retains counts of scaled inputs that are used to determine the approximate area of a cluster in the underlying data. Via reducing the precision of the input, the GPS canvas is therefore divided into a number of \emph{tiles} (cf.~Sect.~\ref{raster-tiles}) for each of which a count of the points it contains is maintained. The number of tiles is finite because the dimensions of the GPS canvas are finite with a latitude $\in [-90, +90]$ and a longitude $\in [-180,180]$. Because the input points are not retained and the range of the input coordinates is finite, the memory requirements of RASTER are constant.



\begin{figure*}
\centering
\begin{subfigure}{.20\textwidth}
  \centering
  \includegraphics[scale=0.50,  trim=1.2cm 2.695cm 2.1cm 0.7cm, clip]{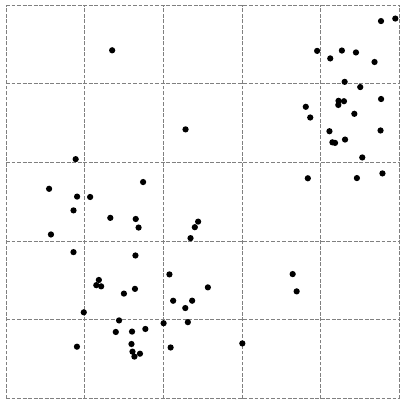}
  \caption{Original data points}
  \label{fig:idea1}
\end{subfigure}\hfill
\begin{subfigure}{.20\textwidth}
  \centering
  \includegraphics[scale=0.50, trim=1.2cm 2.695cm 2.1cm 0.7cm, clip]{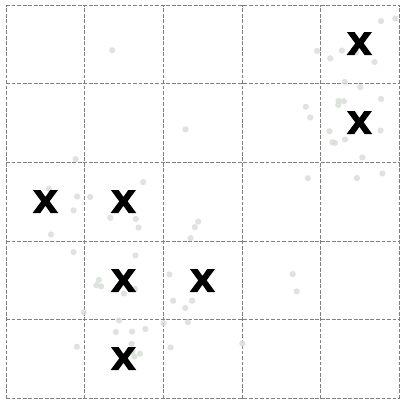}
  \caption{Significant tiles}
  \label{fig:idea2}
\end{subfigure}\hfill
\begin{subfigure}{.20\textwidth}
  \centering
  \includegraphics[scale=0.50, trim=1.2cm 2.695cm 2.1cm 0.7cm, clip]{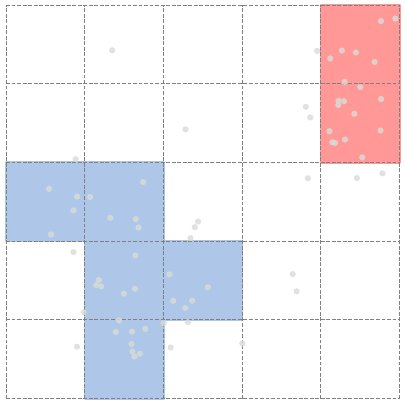}
  \caption{Clustered tiles}
  \label{fig:idea3}
\end{subfigure}\hfill
\begin{subfigure}{.20\textwidth}
  \centering
  \includegraphics[scale=0.50, trim=1.2cm 2.695cm 2.1cm 0.7cm, clip]{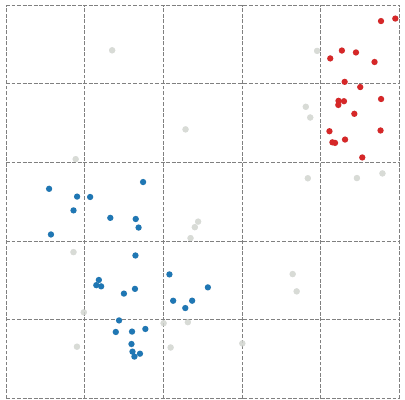}
  \caption{Clustered points}\label{fig:idea4}
\end{subfigure}
\caption{High-level visualization of RASTER (best viewed in color). The original input is shown in a), followed by projection to tiles in b) where only significant tiles are retained. Tile-based clusters are visualized in c), which corresponds to RASTER. Clusters that are returned as collections of points are shown in d), which corresponds to the variant RASTER$'$.}
\label{fig:idea}
\end{figure*}

\subsection{Limitations of two common clustering methods}
\label{23}

In this subsection, we show how the limitations of two common clustering algorithms make them unsuitable for larger data sets, taking into account both runtime and memory considerations. While we focus on showing why the two standard clustering algorithms DBSCAN and $k$-means clustering are not suitable for big data clustering, the bigger point is that these limitations affect all other clustering algorithms that retain their inputs.

DBSCAN identifies density-based clusters. Its time complexity is $\mathcal{O}(n\log{}n)$ in the best case. This depends on whether a query identifying the neighbors of a particular data point can be performed in $\mathcal{O}(\log{}n)$. In a pathological case, or in the absence of a fast lookup query, its time complexity is $\mathcal{O}(n^2)$. \textsc{DBSCAN} is comparatively fast. Yet, when working with many billions or even trillions of data points, clustering becomes infeasible due to the logarithmic factor, provided all data even fits into main memory. We have found that on a contemporary workstation with $16$ GB RAM, the \texttt{scikit-learn} implementation of DBSCAN cannot even handle one million data points. A less efficient implementation using a distance matrix would become infeasible with a much smaller number of data points already. Overall, depending on the implementation, DBSCAN is either too slow or consumes too much memory to be a suitable choice for clustering of big data.



In $k$-means clustering, the number of clusters $k$ has to be known in advance. The goal is to determine $k$ partitions of the input data. Two aspects make $k$-means clustering less suitable for our use case. First, when dealing with big data, estimating a reasonable $k$ is non-trivial. Second, its time complexity is unfavorable. An exact solution requires $\mathcal{O} (n^{dk + 1})$, where $d$ is the number of dimensions~\cite{Inaba1994}. In our case, using $2$-dimensional data, it is $\mathcal{O} (n^{2k + 1})$. Lloyd's algorithm~\cite{lloyd1982least}, which uses heuristics, is likewise not applicable as its time complexity is $\mathcal{O}(dnki)$, where $i$ is the number of iterations until convergence is reached. In practice, $i$ tends to be small and consequently does not pose much of a problem. On the other hand, with increasing numbers of clusters, $k$ is increasingly negatively affecting runtime. There have been recent improvements to improve $k$-means clustering~\cite{bachem2016fast, capo2017efficient}, but their time complexity is still highly unfavorable for huge data sets.




\section{\textsc{RASTER}}
\label{sec:raster}

This section contains a thorough presentation of RASTER, starting with a high-level description of the algorithm in Sect.~\ref{raster-highlevel}, a discussion of the concept of \emph{tiles} as well as their role in creating clusters in Sect.~\ref{raster-tiles}, and an informal time-complexity analysis in Sect.~\ref{raster-alg}. This is followed by an explanation of the variation RASTER$'$, which retains its input, in Sect.~\ref{raster-prime} as well as the parallel variant P-RASTER in Sect.~\ref{sec:p-raster}. Lastly, we highlight some further technical details in Sect.~\ref{raster-further}.


\algrenewcommand\algorithmicrequire{\textbf{input:}}
\algrenewcommand\algorithmicensure{\textbf{output:}}
\begin{algorithm}
\begin{algorithmic}[1]
\Require data \emph{points}, precision \emph{prec}, threshold $\tau$, distance $\delta$, minimum cluster size $\mu$
\Ensure set of clusters \emph{clusters}
\State $\mathit{acc} := \varnothing $ \Comment $\{(x_\pi, y_\pi): \mathit{count}\}$ 
\State $\mathit{clusters} := \varnothing$ \Comment set of sets

\For{$(x, y)$ in \emph{points}} \Comment $\mathcal{O}(n)$ for lls.~3--8   
\State $(x_\pi, y_\pi) := \mathit{project}(x, y, \mathit{prec})$ \Comment $\mathcal{O}(1)$
\If{$(x_\pi, y_\pi)$ $\not\in$ keys of \emph{acc}}{}
\State $\mathit{acc}[(x_\pi, y_\pi)] := 1$
\Else
\State $\mathit{acc}[(x_\pi, y_\pi)] \pluseq 1$
\EndIf
\EndFor

\For{$(x_\pi, y_\pi)$ in $\mathit{acc}$} \Comment $\mathcal{O}(n)$ for lls.~9--11
\If{$\mathit{acc}[(x_\pi, y_\pi)] < \tau$}
\State remove $\mathit{acc}[(x_\pi, y_\pi)]$
\EndIf
\EndFor

\State $\sigma := $ keys of $\mathit{acc}$ \Comment significant tiles

\While{$\sigma \ne\varnothing$} \Comment $\mathcal{O}(n)$ for lls.~12--24
\State $t$ := $\sigma\mathit{.pop()}$
\State \emph{cluster} := $\varnothing$ \Comment set
\State \emph{visit} := $\{t\}$

\While{\emph{visit} $\ne\varnothing$}
  \State $u := $ $\mathit{visit.pop()}$
  \State $\mathit{ns} := \mathit{neighbors}(u, \delta)$  \Comment $\mathcal{O}(1)$
  \State $\mathit{cluster} := \mathit{cluster} \cup \{u\}$
  \State $\sigma := \sigma \setminus \mathit{ns}$  \Comment cf.~ln.~13
  \State $\mathit{visit} := \mathit{visit} \cup \mathit{ns}$
\EndWhile

\If{size of $\mathit{cluster} \geq \mu$}
\State add $\mathit{cluster}$ to $\mathit{clusters}$
\EndIf
\EndWhile

\caption{\textsc{RASTER}}
\label{alg:raster}
\end{algorithmic}
\end{algorithm}

\subsection{High-Level Description}
\label{raster-highlevel}
The goal of \textsc{RASTER} is to reduce a very large number $n$ of $2$-dimensional points to a much more manageable number of coordinates that specify the \emph{approximate area} of clusters in the input data. The input is not retained. Figure~\ref{fig:idea} provides a visualization. In more detail, the algorithm works as follows. It uses an implicit $2$-dimensional grid of a coarser resolution than the input data. Each square of this grid is referred to as a \emph{tile}; each point in the input data is projected to exactly one tile. A tile containing at least a user-specified threshold value $\tau$ of data points is labeled as a \emph{significant tile}. Afterwards, clusters are constructed from adjacent significant tiles. RASTER creates clusters based on tiles; these clusters cover the area in which the clusters of the original input are located (cf.~Fig.~\ref{fig:idea3}). The advantage of this approach is that it provides approximate truth very quickly. Furthermore, for the hub identification problem, the approximate location of a cluster is more important than retaining the precise GPS coordinates of all associated input points; even ignoring points near a cluster that a different clustering method may include is not relevant as long as the cluster and its approximate area are identified. Conversely, a potential downside of RASTER is that it constructs clusters from significant tiles and ignores all points that were projected to those tiles. This leads to great memory efficiency, but it may not provide the user with enough information. After all, the resulting clusters of tiles can be imprecise as they are delimited by the tiles in the implied grid. The variant RASTER$'$ addresses this problem (cf.~Fig.~\ref{fig:idea4}). In addition to clusters and their constituent significant tiles, this variant also retains the original data points that were projected to those tiles. The downside of this increased amount of information is that RASTER$'$ does not use constant memory, unlike RASTER.

   
 

\subsection{Tiles and Clusters}
\label{raster-tiles}
A key component of \textsc{RASTER} is the deliberate reduction of the precision of its input data. This is based on a projection of points to tiles and could, for instance, be achieved by truncating or rounding. The goal is the identification of clusters, which is attained via two distinct and consecutive steps: \emph{contraction} and \emph{agglomeration}. The contraction step first determines the number of input data points per tile and then discards all non-significant tiles. The agglomeration step constructs clusters out of adjacent significant tiles. To illustrate the idea of a \emph{tile}, consider a grid consisting of squares of a fixed side length. A square may contain several coordinate points. Reducing the precision by one decimal digit means removing the last digit of a fixed-precision coordinate. For instance, with a chosen decimal precision of $2$, the coordinates $(1.005, 1.000)$, $(1.009, 1.002)$, and $(1.008, 1.006)$ are all truncated (contracted) to the tile identified by the corner point $(1.00, 1.00)$. Thus, $(1.00, 1.00)$ is a tile with three associated points. This tile would be classified as a significant tile with a threshold value of $\tau \ge 3$ and discarded otherwise. The previous example and the one mentioned in the motivating use case suggest truncating input values, which is implemented as reduction of the precision of the data by an integer power of 10. The input, which consists of floating-point values, is scaled up and converted into integers. However, it is also possible to project input data points to tiles using arbitrary real numbers, except zero. This makes it possible to fine-tune the clustering results or improve performance. For instance, a slightly larger grid size may lead to coarser clusters, but it also entails improved runtime, considering that the number of tiles depends on the chosen precision. With a lower precision value, the number of tiles is much lower, which leads to lowered constant memory requirements and thus an improved runtime of the agglomeration and contraction steps. In general, the number of tiles on a finite canvas of side lengths $a$ and $b$ with precision $p$ is $a \, b \cdot10^{p}$. For practical purposes, $p$ should be the lowest possible value that leads to the desired clustering results.




In the subsequent \emph{agglomeration} step, \textsc{RASTER} clusters are constructed, which consist of significant tiles that are at most a user-specified Manhattan or Chebyshev distance of $\delta$ steps apart. In our implementations, we use a Chebyshev distance of 1, i.e.~we take all eight neighbors of the current tile into account, meaning that significant tiles need to be directly adjacent. Yet, the distance parameter is arbitrary. One alternative would be to use a Manhattan distance $>1$, which implies that a cluster could contain significant tiles without direct neighbors. The parameter $\mu$ specifies the minimum number of tiles a cluster has to contain to be recognized as such by the algorithm. This is of practical importance as it filters out isolated significant tiles that do not constitute a hub. The reasoning behind this approach is that hubs generally cover multiple tiles, so very small clusters are interpreted as noise.

\subsection{Time-Complexity Analysis}
\label{raster-alg}
In this subsection we present some explanations that accompany the RASTER pseudocode in Alg.~\ref{alg:raster}, couched in an informal time-complexity analysis.\footnote{Reference implementations of RASTER and its variants in several programming languages are available at \url{https://github.com/fraunhoferchalmerscentre/raster}.} The algorithm consists of three sequential loops. The first two \emph{for}-loops constitute the \emph{contraction} step and the subsequent \emph{while}-loop the \emph{agglomeration} step. Projecting to a tile consists of associating a data point $p$ to a scaled value representing a tile. RASTER does not exhaustively check every possible tile value, but instead only retains tiles that were encountered while processing data. Due to the efficiency of hash tables, the first \emph{for}-loop runs in $\mathcal{O}(n)$, where $n$ is the number of input data points. Projection is performed in $\mathcal{O}(1)$, which is also the time complexity of the various hash table operations used. After the first \emph{for}-loop all points have been projected to a tile. The second \emph{for}-loop traverses all keys of the hash table \emph{tiles}. Only significant tiles are retained. The intermediate result of this loop is a hash table of significant tiles and their respective number of data points. Deleting an entry is an $\mathcal{O}(1)$ operation. At most, and only in the pathological case where there is exactly one data point per tile, there are $n$ tiles. In any case, it holds that $m \le n$, where $n$ is the number of input data points and $m$ the number of tiles. Thus, this step of the algorithm is performed in $\mathcal{O}(m)$.


\begin{figure*}[t]
\centering
\begin{subfigure}{.33\textwidth}
  \centering
  \includegraphics[scale=0.50, trim=0.5cm 1.3cm 0.5cm 0.5cm, clip]{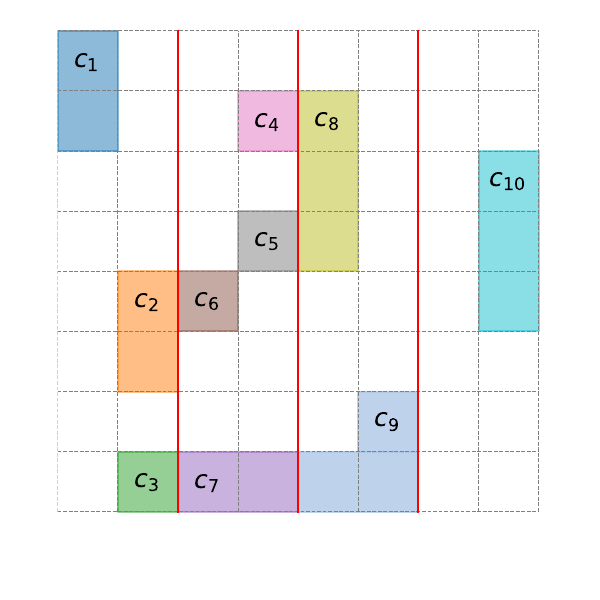}
  \caption{Four subdivisions}
  \label{fig:dc1}
\end{subfigure}\hfill
\begin{subfigure}{.33\textwidth}
  \centering
  \includegraphics[scale=0.50, trim=0.5cm 1.3cm 0.5cm 0.5cm, clip]{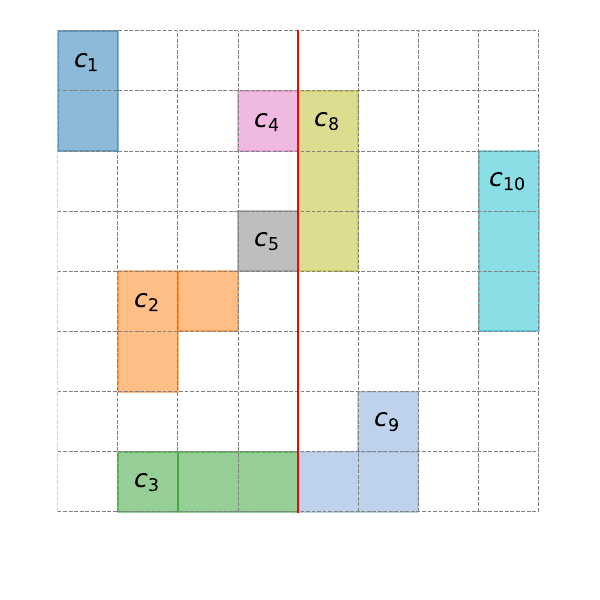}
  \caption{Two subdivisions}
  \label{fig:dc2}
\end{subfigure}\hfill
\begin{subfigure}{.33\textwidth}
  \centering
  \includegraphics[scale=0.50, trim=0.5cm 1.3cm 0.5cm 0.5cm, clip]{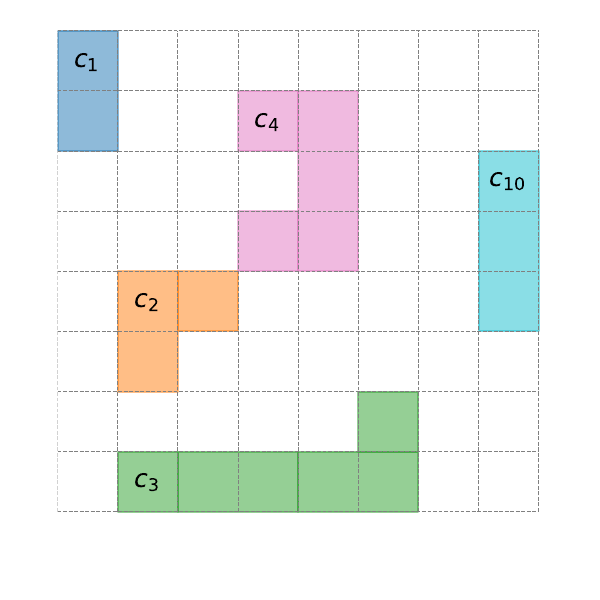}
  \caption{Final clusters}
  \label{fig:dc3}
\end{subfigure}\hfill
\caption{RASTER can be effectively parallelized (figure best viewed in color). This figure shows how clusters that are separated by borders can be joined in parallel in $\log _{}n$ steps, where $n$ is the number of initial slices. In Fig.~\ref{fig:dc1} there are multiple clusters that are separated by borders, which are successively joined as borders between slices are removed. The corner cases of clusters repeatedly crossing a border (e.g.\, $c_4, c_5, c_8$ in Fig.~\ref{fig:dc1}) and clusters crossing multiple borders are shown (cf. clusters $c_5, c_7, c_9$ in Fig.~\ref{fig:dc1}). As joining clusters is not a bottleneck in our use case, the idea presented here remains unimplemented. However, it may be instrumental when implementing RASTER for large-scale data processing (cf.\, Sect.~\ref{sec:future}). Our code iterates through candidate clusters in slices in a sequential manner from right to left; clusters that do not touch the border of a slice are excluded from this step.}
\label{fig:RASTER_DivideAndConquer}
\end{figure*}

The subsequent \emph{while}-loop performs the \emph{agglomeration} step, which constructs clusters from significant tiles. A cluster is a set of the coordinates of significant tiles. There are at most $m$ significant tiles. In order to determine the tiles a cluster consists of, take one tile from the set \emph{tiles} and recursively determine all neighboring tiles in $\mathcal{O}(m)$ in a depth-first manner. Neighborhood lookup with a hash table is in $\mathcal{O}(1)$, as the locations of all its neighboring tiles are known. For instance, when performing agglomerations with a Manhattan distance of 1, the neighbors of any coordinate pair $(x, y)$ are the squares to its left, right, top, and bottom in the grid. One caveat regarding higher dimensional data and how it affects neighborhood lookup should be stated, however. When taking two neighbors per dimension into account, the total number of lookup operations is $2d$, which is suppressed in $\mathcal{O}$-notation. Yet, this implies that RASTER is best-suited for low-dimensional data. In summary, the third loop runs likewise in $\mathcal{O}(m)$. Consequently, the total time complexity is $\mathcal{O}(n)$.


\subsection{The Variation RASTER$'$}
\label{raster-prime}
A key element of RASTER is that information about the input data is only retained in the aggregate and only if it is relevant for clustering. In the projection step, the algorithm determines how many data points were projected to each tile. By discarding the input data, and due to the fact that the range of both the longitude and latitude of GPS data are limited, the resulting memory requirements of our algorithm are constant. Not storing input data is a useful approach when clustering very large amounts of data. However, if all data fits into memory, one could retain all points per tile or all unique points per tile. This increases the memory requirements of the algorithm as clustering can no longer be performed in constant memory. We refer to this variant of our algorithm as RASTER$'$. Compared to the pseudocode in Alg.~\ref{alg:raster}, the required changes are confined to the contraction step. The main difference consists of maintaining a set of unmodified input data points together with each key in the hash table \emph{tiles} instead of incrementing a counter and, subsequently, removing keys if the number of associated stored  data points is below provided threshold value $\tau$. While this variant is not particularly interesting for our motivating use case (cf.~Sect.~\ref{22}), it may allow for a fairer comparison with clustering algorithms that retain their inputs.

\subsection{Parallelizing the Algorithm}
\label{sec:p-raster}


In this section we describe the parallelization of RASTER and RASTER$'$. The resulting parallel versions are called P-RASTER and P-RASTER$'$, respectively. As a preliminary step, however, the algorithm first divides the input into unsorted batches. This is followed by processing each batch in parallel by performing the projection and accumulation steps. For each thread, this results in one hashmap with the number of points per tile. For performance reasons, the input is not sorted, which implies that there can be entries for the same tile in different hashmaps. Consequently, these hashmaps have to be joined. After joining, the significant tiles can be identified and the non-significant tiles removed.

For the next step, the metaphor of a vertical slice of the input space is helpful. The resulting significant tiles are assigned to a slice based on their horizontal position, but they are not sorted within a slice. Afterwards, the significant tiles of each vertical slice are clustered concurrently. However, clusters may cross the border between slices. Thus, only clusters that do not touch any border can be dismissed or retained based on the $\mu$ criterion. The others need to be joined if they have a neighboring tile in a slice to the right or left. They are only discarded if the final joined cluster is below the specified minimum size. This is done while joining slices, which may happen iteratively from one side to the other, or in a bottom-up divide-and-conquer approach. A pseudo-code specification of the agglomeration step of P-RASTER is provided in Alg.~\ref{alg:concurrent-clustering}; the procedure for joining clusters is shown separately as Alg.~\ref{alg:join}. Joining clusters that cross borders has been implemented in a sequential manner as it only consumes a trivial amount of the total runtime.

For slicing the input canvas vertically, we utilize domain knowledge as it is known that the longitude of GPS coordinates is in the range $[-180, 180]$. Thus, we can easily divide the input into evenly spaced slices. Alternatively, one could dynamically determine the minimum and maximum longitude value. Furthermore, it may be more appropriate, also in the context of GPS data, to dynamically determine the width of each slice. With GPS data, this is particularly relevant as the input is normally not uniformly distributed across the input space. For optimal performance, each slice should contain approximately the same number of data points. These performance benefits relate to clustering, but, as stated above, not to the final joining of slices, which happens sequentially. In contrast, Fig.~\ref{fig:RASTER_DivideAndConquer} illustrates how this could be done in parallel with in a divide-and-conquer approach. Lastly, relevant correctness proofs for P-RASTER are provided in \ref{app-c}.

\algrenewcommand\algorithmicrequire{\textbf{input:}}
\algrenewcommand\algorithmicensure{\textbf{output:}}

\begin{algorithm}[t]
\caption{P-RASTER: Agglomeration}
\label{alg:concurrent-clustering}

\begin{algorithmic}[1]
\Require Minimum cluster size $\mu$, parallelism $N$, set of significant tiles
\Ensure Set of clusters $S_c$ with at least $\mu$ tiles
\State Sort all significant tiles into $N$ vertical slices according to their spatial order
\For{each slice in parallel}
    \State initialize $C_b$ \Comment clusters next to border
    \State determine clusters $C$
    \For{$c$ in $C$}
        \If{$c$ touches a border}
            \State add $c$ to $C_b$\Comment{candidates for joining}
        \ElsIf{$|c| \geq \mu$}
            \State add $c$ to $S_c$
        \EndIf
    \EndFor
\EndFor \label{alg:line:done-clustering}
\For{border $B_i$ in $B_{N-1}...B_1$} \Comment{iterative joining, right to left}
    \State $C_{lr} := $ clusters in $C_b$ that touch border $B_i$
    \State $C_j, C_{inter} :=$ \Call{join\_border}{$C_{lr}$} \Comment{cf. Alg.~\ref{alg:join}}
    \State add $C_j$ to $S_c$
    
\State remove clusters that touch $B_i$ and $B_{i-1}$ from $C_b$ 
\State add $C_{inter}$ to $C_b$
\EndFor

\end{algorithmic}
\end{algorithm}

\begin{algorithm}[h]
\caption{P-RASTER: Joining clusters}
\label{alg:join}

\begin{algorithmic}[1]
\Require Minimum cluster size $\mu$, clusters $C_{lr}$ that touch border $B_i$
\Ensure Two sets of clusters, $C_j$ and $C_{inter}$. The first set contains clusters with at least $\mu$ tiles that only touch border $B_{i}$. The second set contains clusters that touch both $B_{i}$ and $B_{i-1}$.
\Procedure{join\_border}{$C_{lr}$}
\State initialize the sets $C_{inter}$ and $C_j$
\For{$c$ not visited in $C_{lr}$} \label{alg:line:start}
    \State $c_{join} := c$
    \State $to\_visit := \{c\}$
    \For{$v$ in $to\_visit$}
        \State take neighbors to $v$ from $C_{lr}$
        \State add neighbors to $to\_visit$
        \State merge neighbors into $c_{join}$
    \EndFor
    \If{$c_{join}$ touches $B_{i-1}$}
        \State add $c_{join}$ to $C_{inter}$ \Comment{add inter-slice cluster}
    \ElsIf{$|c_{join}| \geq \mu$}
        \State add $c_{join}$ to $C_j$
    \EndIf
\EndFor
\EndProcedure
\end{algorithmic}
    
\end{algorithm}

\subsection{Further Technical Details}
\label{raster-further}
This subsection contains further technical details, covering the practical aspect of working with arbitrarily large data sets, the mostly theoretical issue of disadvantageous grid layouts, and a brief remark on higher-dimensional input data.


\paragraph{Working with huge data sets}
RASTER can handle arbitrarily large data sets as its memory requirements are constant. To demonstrate this, assume $M$ available main memory of which $T < M$ is required for the hash table \emph{tiles}. The values stored in it are fixed-size integers and the number of tiles has a constant upper bound as the range of the input values is fixed. This implies that $T$ has a fixed upper bound. Afterwards, partition the input data into chunks of at most size $P \leq M - T$ and iteratively perform the projection step on them. After processing the entire input, continue with determining significant tiles, followed by clustering. With this approach, RASTER can not only process an arbitrarily large amount of data, the input can also be partitioned in an arbitrary manner and processed in an arbitrary order without affecting the resulting clusters. 



\paragraph{Minimum Cluster Size in Disadvantageous Grid Layouts}
We consider truncation of a fixed number of decimal digits to be the standard behavior of \textsc{RASTER}. As long as the projection to tiles is injective, any projection can be chosen. Yet, for any possible projection a corner case can be found that illustrates that a significant tile may not be detected based on the particular implied grid of the chosen projection. However, due to hubs generally consisting of a large number of points, this is merely a theoretical issue. As an illustration, assume a threshold of $\tau = 4$ for significant tiles, and four adjacent points. If all points were located in the same tile of a grid, a significant tile would be detected. However, those four points could also be spread over neighboring tiles, as illustrated by Fig.~\ref{fig:RASTER_Limit}. One could shift the grid by choosing a different projection, but an adversary could easily place all points on different tiles in the new grid. In order to alleviate this problem, a threshold $\tau' < \tau$ for the number of data points in a tile would need to be picked. In the provided example, for instance, a value of $\tau' = \frac{\tau}{4}$ would be required, but his would mean that every tile is significant as long as at least one point was projected to it. An alternative and arguably preferable solution to this problem would be to add an additional step to RASTER to make the results more precise. With a complexity of $\mathcal{O}(m)$, where $m$ is the number of tiles, one could retain any group of four tiles containing at least $\tau$ points. This is can be done in linear time because, for each row in a grid, one only needs to take the current and next row into account. For example, one could start with the tile in the top left corner of the input space and take its right neighbor as well as the two tiles adjacent in the next row into account. This is followed by iterating through the entire grid in that manner.

\begin{figure}
\centering
\includegraphics[scale=0.40]{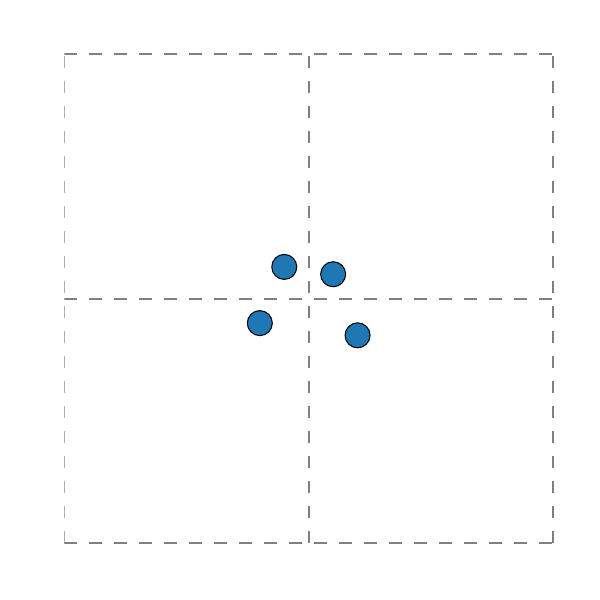}
\caption{RASTER is based on the idea of significant tiles, i.e.\ tiles that contain more than a threshold $\tau$ number of points. However, as this figure illustrates, the location of the grid can interfere and separate points in close vicinity. In the given example, when using a threshold of $\tau = 4$, no significant tile would be detected. This is only a theoretical problem for our algorithm and its primary use case as hubs contain large numbers of points. However, a post-processing step can solve this minor issue at a very modest cost.}
\label{fig:RASTER_Limit}
\end{figure}

\paragraph{Generalizing to Higher Dimensions}
While we have focused on $2$-dimensional geospatial data, it is possible to generalize RASTER to $d$ dimensions. Generalizing from $\mathbb{R}^2$, irrespective of dimension, a similar case can be constructed for $\mathbb{R}^n$. Considering the special case of a projection based on powers of 10 (cf.~Sect.~\ref{raster-tiles}), the reduction for each decimal value is $10^2$ and $10^n$ per tile for $\mathbb{R}^2$ and $\mathbb{R}^n$, respectively. In the case of a Manhattan distance of $\delta = 1$, the number of neighbors to consider per tile is $2d$. When taking all neighbors into account, e.g.~8 in the case of two dimensions, the number of neighbors to look up per tile is $3^d-1$.


\section{Evaluation}
\label{raster-eval}
In this section, we present an evaluation of RASTER based on sequential processing of increasingly larger inputs. Afterwards, we asses how stand-alone versions of RASTER implemented in Python and Rust scale with increasing inputs. This is followed by an evaluation of a parallel implementation of RASTER.

\subsection{Experiments}
\label{eval-experiment}
We subjected RASTER to three experiments. Experiment 1 is based on an existing benchmark that is part of the \texttt{scikit-learn} library, version 0.20.3, comparing several common clustering algorithms with comparatively modestly sized inputs. Experiment 2 explores how implementations in RASTER and RASTER$'$ scale with different parameter values and input sizes. Lastly, experiment 3 explores the parallelization potential of RASTER.

For experiment 1 we were aided by the \texttt{scikit-learn} Python implementations of various clustering algorithms. In this comparative evaluation of RASTER, we use the following nine standard clustering algorithms of this library: 
mini-batch $k$-means clustering~\cite{sculley2010web},
affinity propagation~\cite{frey2007clustering},
mean-shift clustering~\cite{comaniciu2002mean},
spectral clustering~\cite{shi2000normalized, stella2003multiclass},
Ward's method~\cite{ward1963hierarchical},
agglomerative clustering~\cite{voorhees1986implementing},
DBSCAN~\cite{ester1996density},
BIRCH~\cite{zhang1996birch},
Gaussian mixture models~\cite{Reynolds2009GaussianMM}. In addition, we compare RASTER to CLIQUE~\cite{Agrawal1998, Agrawal2005}. All of these algorithms can be used for density-based clustering. However, some were designed for vastly different use cases. We consider these to be the practically most relevant clustering algorithms, which is a view that is supported by the maintainers of \texttt{scikit-learn}. This open-source project uses a relatively high threshold for inclusion that considers, among others, how much interest an algorithm has garnered in research and how widely it has been adopted by practitioners.\footnote{The maintainers state, as part of the documentation of \texttt{scikit-learn}, "We only consider well-established algorithms for inclusion. A rule of thumb is at least 3 years since publication, 200+ citations, and wide use and usefulness. A technique that provides a clear-cut improvement (e.g.~an enhanced data structure or a more efficient approximation technique) on a widely-used method will also be considered for inclusion." See~\url{https://scikit-learn.org/stable/faq.html\#what-are-the-inclusion-criteria-for-new-algorithms} (Accessed 18 December 2024).} CLIQUE is the only algorithm not directly taken from \texttt{scikit-learn}. The reason for adding CLIQUE was that there are some conceptual similarities between this particular algorithm and RASTER. Consequently, we consider these ten algorithms the most relevant for our evaluation. 

The implementation of CLIQUE is a modification of an open-source implementation.\footnote{The code of CLIQUE is based on the implementation by Gy{\"o}rgy Katona, which we sped up by a factor higher than $3$. The original code is available at \url{https://github.com/georgekatona/clique} (Accessed 15 April 2019). Our modified code is included in the code repository accompanying this paper.} RASTER is severely hamstrung in this experiment as the particular setup of this benchmark necessitates us to perform the projection step a second time over the entire input set in order to generate the expected output. In contrast, our algorithm was designed to perform a single pass and not retain its input. We therefore label this variant of our algorithm RASTER$\star$. All algorithms were tweaked to give good results, i.e.\ detecting as many clusters as possible. In this experiment, the input consists of increasingly large input files, ranging from 5K to 500K data points, corresponding to 10 to 1,000 clusters. Each algorithm implementation was executed a total of ten times, recording the number of clusters identified, the runtime in seconds, and the average silhouette coefficient~\cite{ROUSSEEUW198753}, denoted as $s$, which is an indicator of the quality of the resulting clustering based on cluster cohesion (the higher, the better) and cluster separation (the higher, the better). As the input consists of dense and spatially separated clusters, the resulting value should be 1.00 or close to it. This metric thus serves as a plausibility check.

The focus is on comparing sequential performance, but some of the Python implementations use more than one CPU core. As this does not affect RASTER, which runs on a single CPU core, it was not of a primary concern for us. Yet, this means that the performance of some of the other algorithms is arguably better than it would have been if it was possible to enforce using only a single core. In addition to measuring runtime and cluster quality, we also explored at which points the chosen algorithms run out of memory.




\begin{table*}[t]
\centering\ra{1.0}\begin{tabular}{@{}lrrrrrrrrr@{}}

\toprule \emph{Clusters} \phantom{} &


 \textbf{\%} & \textbf{$t$} & \phantom{abC} \textbf{$s$}
&& \phantom{abcdef} \textbf{\%} & \textbf{$t$} & \phantom{abc} \textbf{$s$}\\

 \toprule

& \multicolumn{3}{r}{\textbf{RASTER$\star$}} & & &\\
$10^1$ &  100.0 & 0.02 $\pm$ \hphantom{0}0.00  & 1.00
       &&  & & &&\\
$10^2$ & 100.0 & 0.19 $\pm$ \hphantom{0}0.01 & 1.00
       && & & &&\\
$10^3$ & 100.0 & 1.93 $\pm$ \hphantom{0}0.01 & 1.00
       && & & &&\\

\midrule

& \multicolumn{3}{r}{\textbf{Mean Shift}} & &
  \multicolumn{3}{r}{\textbf{DBSCAN}}\\
$10^1$ &  100.0 & 0.03 $\pm$ \hphantom{0}0.00 & 1.00 
       && 100.0 & 0.03 $\pm$             0.00 & 1.00\\
$10^2$ &  100.0 & 0.36 $\pm$ \hphantom{0}0.01 & 1.00
       && 100.0 & 0.42 $\pm$             0.03 & 1.00\\
$10^3$ &  100.0 & 5.05 $\pm$ \hphantom{0}0.06 & 1.00
       &&  99.8 & 6.12 $\pm$             0.12 & 1.00 \\
\midrule

& \multicolumn{3}{r}{\textbf{Mini-Batch $k$-Means}} & &
  \multicolumn{3}{r}{\textbf{BIRCH}}
\\
$10^1$ & 100.0 & 0.04 $\pm$ \hphantom{0}0.00 & 1.00
       && 100.0 & 0.09 $\pm$ 0.05 & 1.00\\
$10^2$ & 100.0 & 1.22 $\pm$ \hphantom{0}0.63 & 1.00
       && 100.0 & 1.52 $\pm$ 0.00 & 1.00\\
$10^3$ & 100.0 & 4.97 $\pm$ \hphantom{0}0.05 & 0.93
       && 99.0 & 17.08 $\pm$ 0.60 & 0.99\\
\midrule

& \multicolumn{3}{r}{\textbf{Gaussian Mixture}} & &
\multicolumn{3}{r}{\textbf{Ward}}
\\
$10^1$ &  100.0 & 0.03 $\pm$ \hphantom{0}0.00 & 1.00
       && 100.0 & 0.34 $\pm$ 0.00 & 1.00\\
$10^2$ &  100.0 & 1.31 $\pm$ \hphantom{0}0.12 & 1.00
       && 100.0 & 47.39 $\pm$ 9.38 & 1.00\\
$10^3$ &  100.0 & 164.92 $\pm$ \hphantom{0}4.79 & 1.00
       && --- & --- & ---\\

\midrule

& \multicolumn{3}{r}{\textbf{Agglomerative}} & &
\multicolumn{3}{r}{\textbf{CLIQUE}}
\\
$10^1$ &  100.0 & 0.34 $\pm$ \hphantom{0}0.00 & 1.00
       && 100.0 & 0.28 $\pm$ 0.00 & 1.00\\
$10^2$ &  100.0 & 52.34 $\pm$ 10.56 & 1.00
       && 100.0 & 312.26 $\pm$ 9.15 & 0.97\\
$10^3$ & --- & --- & ---
       && --- & --- & ---\\

\midrule

& \multicolumn{3}{r}{\textbf{Spectral}} & &
\multicolumn{3}{r}{\textbf{Affinity Propagation}}
\\
$10^1$ & 100.0 & 1.87 $\pm$ \hphantom{0}0.06 & 1.00
       && 100.0 & 22.07 $\pm$ 0.97 & 1.00\\
$10^2$ & --- & --- & ---
       && --- & --- & ---\\
$10^3$ & --- & --- & ---
       && --- & --- & ---\\
\bottomrule\end{tabular}
\caption{Comparison of sequential RASTER with various standard clustering algorithms. RASTER is labeled RASTER$\star$ because it deviates from our specification. In order to integrate with the \texttt{scikit-learn} benchmark, this algorithm has to retain the input in memory and read it two times. The input size is stated as the number of clusters, where each cluster contains 500 points. The percentage given is based on the number of clusters that are identified versus the number generated as the input. All times $t$ are rounded, so values below 0.004 seconds are recorded as 0.00. As a qualitative performance metric, the silhouette coefficient $s$ was used. Missing entries are due to algorithms running out of main memory or excessive runtime.} 
\label{table:comparison}
\end{table*}

The comparative benchmark in experiment 1 puts our algorithm at a significant disadvantage due to the fact that RASTER$\star$ is slower than RASTER. We therefore performed an additional experiment that gives a better insight into the true performance of our algorithm: Experiment 2 evaluates stand-alone implementations of RASTER and RASTER$'$ in Python and Rust. Unlike with the previous implementations, this allowed us to closely follow the specification provided in Alg.~\ref{alg:raster}. A key aspect of RASTER is that input data points are projected to tiles. The number of implied tiles is based on a precision factor, and the higher that factor is, the greater the number of tiles is as well. With a lower precision value, RASTER is faster and requires less memory as there are fewer tiles to project to, but resulting clusters are coarser. The algorithms processed input files with $10^2$ to $10^6$ clusters, corresponding inputs ranging from 50K to 500M data points. For the chosen precision values of  3, 3.5, 4, and 5, we recorded the number of clusters identified and the runtime. This experiment also serves as a baseline for a comparison with parallel RASTER, which is described below.

The goal of experiment 3 is to empirically determine the speedup of our parallelization efforts. We benchmarked Rust implementations of RASTER and RASTER$'$ on large data sets, one with 50M points ($10^5$ clusters) and one with 500M data points ($10^6$ clusters). For precision values of 3.5 and 4, we measured the total runtime as well as the time spent on projection and clustering as the number of cores increases from 1 to 8 in powers of 2.

All experiments were executed on a contemporary workstation with an octa-core AMD Ryzen 7 2700X CPU, clocked at 3.70 GHz, and 64 GB RAM. The chosen operating system was CentOS Linux 7.6.1810. 

\definecolor{Gray}{gray}{0.9}

\begin{table*}[t]
\centering\ra{1.0}\begin{tabular}{@{}lrrrrrrrrrrr@{}}

\toprule&

\multicolumn{2}{c}{prec = 3} & \phantom{i}&
\multicolumn{2}{c}{prec = 3.5} & \phantom{i}& \multicolumn{2}{c}{prec = 4} & \phantom{i}& \multicolumn{2}{c}{prec = 5}
\\

\cmidrule{2-3} \cmidrule{5-6} \cmidrule{8-9} \cmidrule{11-12}
\emph{Clusters}& \phantom{} \% & $t$
&& \phantom{} \% & $t$ 
&& \phantom{} \% & $t$
&& \phantom{} \% & $t$

\\ \midrule

&  &

\multicolumn{9}{c}{\textbf{RASTER (Python)}}
\\
\rowcolor{Gray}
$10^2$
&   87.0 & 0.03 $\pm$ 0.00
&& 100.0 & 0.03 $\pm$ 0.00
&& 100.0 & 0.04 $\pm$ 0.00
&&  87.0 & 0.04 $\pm$ 0.00\\

$10^3$
&   77.6 & 0.35 $\pm$ 0.00
&& 100.0 & 0.35 $\pm$ 0.01
&& 100.0 & 0.42 $\pm$ 0.00
&&  93.7 & 0.41 $\pm$ 0.00\\

\rowcolor{Gray}
$10^4$
&   78.7 & 3.69 $\pm$ 0.03
&& 100.0 & 4.22 $\pm$ 0.06
&& 100.0 & 5.10 $\pm$ 0.04
&&  92.1 & 4.51 $\pm$ 0.03\\

$10^5$
&   99.9 & 49.71 $\pm$ 0.33
&& 100.0 & 51.40 $\pm$ 0.06
&& 100.0 & 56.53 $\pm$ 0.61
&&  99.3 & 50.03 $\pm$ 0.20\\

\rowcolor{Gray}
$10^6$
&  --- & ---
&& --- & ---
&& --- & ---
&& --- & ---\\

\midrule
& & \multicolumn{9}{c}{\textbf{RASTER$'$ (Python)}}
\\
\rowcolor{Gray}
$10^2$
&   87.0 & 0.04 $\pm$ 0.00
&& 100.0 & 0.04 $\pm$ 0.00
&& 100.0 & 0.05 $\pm$ 0.00
&&  87.0 & 0.06 $\pm$ 0.00\\

$10^3$
&   77.6 & 0.44 $\pm$ 0.00
&& 100.0 & 0.49 $\pm$ 0.03
&& 100.0 & 0.81 $\pm$ 0.00
&&  93.7 & 0.84 $\pm$ 0.05\\

\rowcolor{Gray}
$10^4$
&   78.7 &  6.09 $\pm$ 0.03
&& 100.0 &  7.27 $\pm$ 0.51
&& 100.0 &  9.12 $\pm$ 0.30
&&  92.1 &  9.56 $\pm$ 0.04\\

$10^5$
&   99.9 &  74.45 $\pm$ 0.51
&& 100.0 &  84.74 $\pm$ 0.67
&& 100.0 & 107.64 $\pm$ 0.37
&&  99.3 & 121.79 $\pm$ 5.49\\

\rowcolor{Gray}
$10^6$
&  --- & --- 
&& --- & --- 
&& --- & --- 
&& --- & ---\\

\midrule

& & \multicolumn{9}{c}{\textbf{RASTER (Rust)}}
\\
\rowcolor{Gray}
$10^2$
&   87.0 & $<$ 0.01 $\pm$ 0.00
&& 100.0 & $<$ 0.01 $\pm$ 0.00
&& 100.0 & $<$ 0.01 $\pm$ 0.00
&&  87.0 & $<$ 0.01 $\pm$ 0.00\\

$10^3$
&   77.6 & 0.01 $\pm$ 0.00
&& 100.0 & 0.01 $\pm$ 0.00
&& 100.0 & 0.01 $\pm$ 0.00
&&  93.7 & 0.02 $\pm$ 0.00\\

\rowcolor{Gray}
$10^4$
&   78.7 & 0.08 $\pm$ 0.00
&& 100.0 & 0.15 $\pm$ 0.00
&& 100.0 & 0.28 $\pm$ 0.00
&&  92.1 & 0.40 $\pm$ 0.00\\

$10^5$
&   99.9 & 1.38 $\pm$ 0.00
&& 100.0 & 2.53 $\pm$ 0.00
&& 100.0 & 4.74 $\pm$ 0.01
&&  99.3 & 5.09 $\pm$ 0.01\\

\rowcolor{Gray}
$10^6$
&   99.9 & 33.85 $\pm$ 0.07
&& 100.0 & 48.59 $\pm$ 0.04
&& 100.0 & 60.61 $\pm$ 0.05
&&  99.3 & 58.05 $\pm$ 0.10\\

\midrule
& & \multicolumn{9}{c}{\textbf{RASTER$'$ (Rust)}}\\

\rowcolor{Gray}
$10^2$
&   87.0 & $<$ 0.01 $\pm$ 0.00
&& 100.0 & $<$ 0.01 $\pm$ 0.00
&& 100.0 & $<$ 0.01 $\pm$ 0.00
&&  87.0 &     0.01 $\pm$ 0.00\\

$10^3$
&   77.6 & 0.01 $\pm$ 0.00
&& 100.0 & 0.02 $\pm$ 0.00
&& 100.0 & 0.04 $\pm$ 0.00
&&  93.7 & 0.09 $\pm$ 0.00\\

\rowcolor{Gray}
$10^4
$&  78.7 & 0.24 $\pm$ 0.00
&& 100.0 & 0.50 $\pm$ 0.00
&& 100.0 & 0.95 $\pm$ 0.01
&&  92.1 & 1.62 $\pm$ 0.00\\

$10^5$
&   99.9 &  4.32 $\pm$ 0.15
&& 100.0 &  6.12 $\pm$ 0.31
&& 100.0 & 10.75 $\pm$ 0.48
&&  99.3 & 19.71 $\pm$ 0.03\\

\rowcolor{Gray}
$10^6$
&   99.9 &  64.46 $\pm$ 0.89
&& 100.0 &  82.92 $\pm$ 3.20
&& 100.0 & 140.59 $\pm$ 6.81
&&  99.3 & 248.33 $\pm$ 0.37\\
\bottomrule\end{tabular}\caption{Comparing (sequential) stand-alone implementations of RASTER and RASTER$'$ in Python 3.6 and Rust 1.35 at different precision values. The provided figures show the averages runtime $t$ of five runs, along with their standard deviation. The chosen parameters were  threshold $\tau = 5$ and minimum cluster size $\mu = 4$. The input size is stated as the number of clusters, where each cluster contains 500 points. The most noteworthy aspect is that fine-tuning the precision parameter entails performance improvement without deteriorating the quality of the results, with the caveat that the clusters identified by tiles in RASTER are coarser with a lower precision.}
\label{table:seq}
\end{table*}

\subsection{Data}
RASTER is designed for clustering unlabeled data where there is no ground truth available. Thus, it is an example of unsupervised learning. However, the input data for both experiments was produced by a custom data generator. Its output is synthetic data that is an idealization of an existing proprietary real-world data set. For such data sets, we know how many clusters the algorithms should detect, but the data is not labeled. For the first experiment, using the GPS coordinate plane as a canvas, the data generator randomly determines a predefined number of hypothetical cluster centers and spreads 500 points around them in a uniform distribution with random parameters, which means that clusters vary in their density. Cluster centers are located at least a given minimum distance apart. For the second experiment the data generator was modified to speed up file creation for larger files. This variant of our data generator divides the canvas into a number of rectangles and places cluster centers in them. This affects files with $10^5$ and $10^6$ cluster centers. The resulting distribution of cluster centers is less random than it is with our data sets that contain fewer clusters. As one idiosyncrasy of the generated data is that data points forming a cluster are produced in sequence, all data was randomly shuffled, based on the reasoning that partially sorted data may unfairly advantage some algorithms, while randomly shuffled data can be expected to not provide a discernible advantage to any algorithm.

Using a real-world data set instead of a synthetic data set would have been suboptimal for a number of reasons. First, there is the issue of confidentiality. The real-world data set that inspired the work described in this paper was given to us on the basis of strict confidentiality, governed by a non-disclosure agreement. Thus, we are legally not able to share this data. Second, the size of our real-world data set is far too large to make it easily shareable. In contrast, the synthetic data generator we share is defined as a relatively short Python program. Third, real-world data normally needs to be cleaned before it can be used. There is arguably little scientific value in this step. In our view, the synthetic data we provide, via a precise data definition, is superior to a real-world data set as it can be used as it is. It is trivial to generate large data sets this way. As our subsequent results show, even comparatively small data sets already push other standard clustering algorithms to their limits, so there is probably not much to be gained from processing larger data sets in this comparative performance evaluation.



\begin{figure*}
\centering
\begin{subfigure}{.20\textwidth}
  \centering
  \includegraphics[scale=0.45, trim=0.2cm 0.2cm 0.2cm 0.2cm, clip]{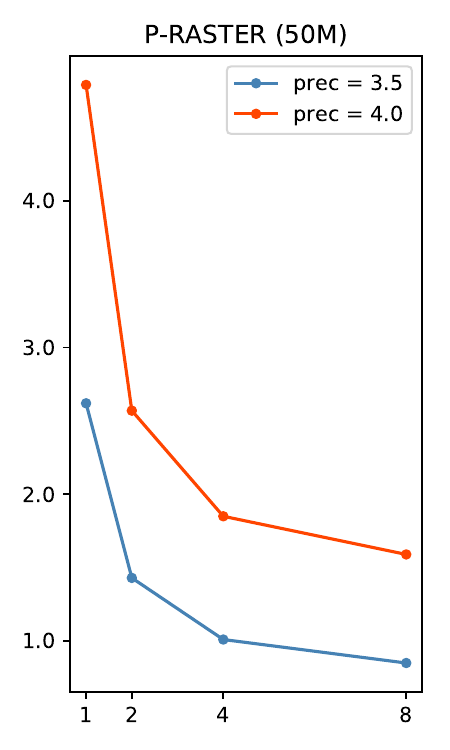}
  \label{fig:sub1}
\end{subfigure}\hfill
\begin{subfigure}{.20\textwidth}
  \centering
  \includegraphics[scale=0.45, trim=0.2cm 0.2cm 0.2cm 0.2cm, clip]{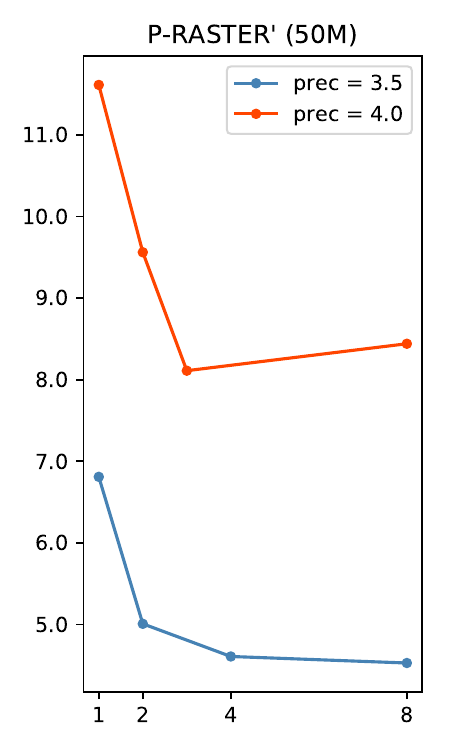}
  \label{fig:sub2}
\end{subfigure}\hfill
\begin{subfigure}{.20\textwidth}
  \centering
  \includegraphics[scale=0.45, trim=0.2cm 0.2cm 0.2cm 0.2cm, clip]{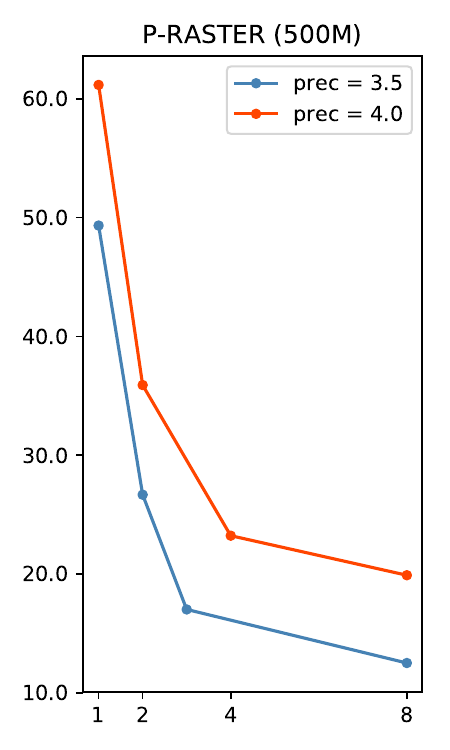}
  \label{fig:sub3}
\end{subfigure}\hfill
\begin{subfigure}{.20\textwidth}
  \centering
  \includegraphics[scale=0.45, trim=0.2cm 0.2cm 0.2cm 0.2cm, clip]{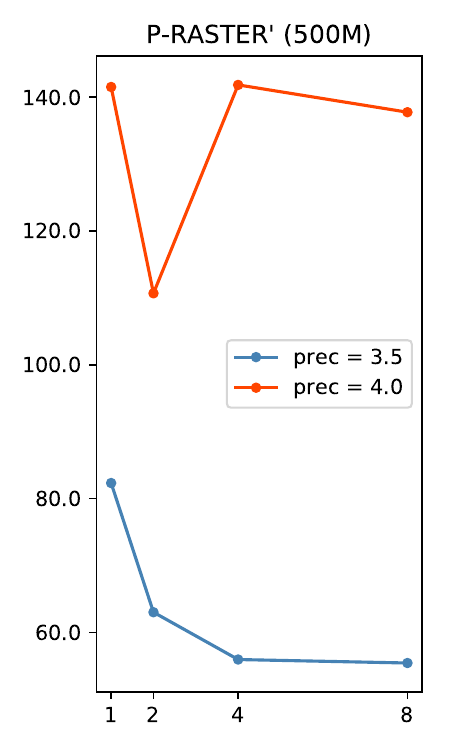}
  \label{fig:sub4}
\end{subfigure}
\vspace{-1.5em}
\caption{Scalability of P-RASTER and P-RASTER$'$ when clustering 50M and 500M points (best viewed in color). With $c$ cores, P-RASTER achieves a performance improvement of around $\frac{c}{2}$. The scalability of P-RASTER$'$, on the other hand, is limited, due to cache synchronization and parallel slowdown issues. The latter is, in particular, an issue with very large data sets.}
\label{fig:par}
\end{figure*}

\subsection{Results}
\label{raster-results}
\paragraph{Experiment 1}
Table~\ref{table:comparison} lists numerical results of ten clustering algorithms, relating to experiment 1. The stated averages are of ten runs with standard deviations in case of the runtime. Missing entries are due to memory constraints or excessive runtime. Some of these algorithms, such as Affinity Propagation and Spectral Clustering, were not even able to process 500K points in 64 GB RAM, even though the input corresponds to less than 2 MB. Overall, the hamstrung version of RASTER$\star$ clearly outperforms the other algorithms, with a runtime difference ranging from a single-digit multiple in the case of DBSCAN to a triple-digit multiple in the case of CLIQUE. Furthermore, Spectral Clustering did not honor setting the number of jobs to 1 and instead used up all available threads. Since this algorithm is a transformation of $k$-means, its runtime is strictly worse than the latter's. Gaussian Mixture Modelling also used all available threads at times. The instances of unwanted parallelization are probably due to \texttt{numpy} calling multi-threaded C-libraries.\footnote{Trying to artificially limit the number of used threads with the Linux tool \texttt{numactl} lead to erroneous behavior of the affected algorithms.}

We also looked at RAM consumption. The maximum input size the various algorithms were able to process is as follows: 5K points ($10^1$ clusters) in the case of Affinity Propagation and Spectral Clustering; 50K points ($10^2$ clusters) in the case of Ward's method, Agglomerative Clustering and CLIQUE; $10^3$ clusters (500K points) in the case of Mini-batch $k$-means, BIRCH and Gaussian Mixture; 5M points ($10^4$ clusters) in the case of DBSCAN; 50M points ($10^5$ clusters) in the case of RASTER$\star$ and Mean Shift. Mini-batch $k$-means had an overly excessive runtime, projected to be multiple hours, with 5M points ($10^4$ clusters); CLIQUE showed the same problem with 500K points ($10^3$ clusters) already. Thus, we terminated these benchmarks early. Even though Mean-Shift and RASTER$\star$ would be able to process larger inputs, the design of that particular benchmark was a limiting factor as all data is stored in memory and retained. Regarding Mini-batch $k$-means, it is also worth pointing out that the algorithm occasionally seems to get trapped in a local optimum, which leads to runtimes that are over 10 times worse. The implementation in \texttt{scikit-learn} terminates those eventually. In those rare outliers, both the silhouette coefficient and the percentage of identified clusters drop significantly.

\paragraph{Experiment 2}
In Table~\ref{table:seq}, a better overview of the true performance of RASTER and RASTER$'$ is given. The Python implementations were not able to process files with 500M input points (1M clusters) as they ran out of memory.\footnote{This could be fixed by modifying the implementation to not use relatively memory-intensive tuples for storing GPS coordinates. We were more interested in a straightforward implementation of RASTER in Python, however, which is why this was not done.} In general, RASTER scales well as the input size increases. The chosen precision value affects both runtime and quality of results. For example, a precision of 3 is fast, yet the resulting coarse grid detects fewer clusters.

\paragraph{Experiment 3}
As the results in Fig.~\ref{fig:par} show, both RASTER and RASTER$'$ scale well. With RASTER, the total speedup with 8 cores is between 3 and 4. In contrast, the performance improvements with RASTER$'$ are greatly affected by the chosen precision. With a precision of 3.5, performance improvements are above 30\% with four cores, but level off with 8 cores. On the other hand, with a precision of 4, performance only improves when using up to 4 course with the 50M data set. With the 500M data set, performance improves by ca.~22\% with two cores, but deteriorates with four and eight cores. A detailed breakdown of the runtime of the projection and clustering steps of P-RASTER and P-RASTER$'$ is provided in \ref{app-d}.





\paragraph{Discussion}
In experiment 1, RASTER$\star$ outperforms an assortment of other clustering algorithms by a wide margin. Given the popularity of \texttt{scikit-learn}, these implementations can be considered state-of-the-art. Of course, those algorithms may have been developed for other use cases than ours (cf.~\ref{app-a} for a brief discussion of RASTER as a general-purpose density-based clustering algorithm). Not only is their runtime worse, many of them have memory requirements that make them unsuitable for density-based clustering of big data. Even though RASTER$\star$ was hamstrung in that comparison, it still has the best performance, and by a wide margin. Furthermore, as a comparison of the data in Table ~\ref{table:comparison} and Table \ref{table:seq} shows, RASTER is about five times faster than RASTER$\star$, without those limitations. 

There is a conceptual similarity between a sub-method of CLIQUE and RASTER, but the former performs lookup in a way that scales poorly as the input increases. This may not be obvious when using inputs with very few clusters, as it was done in the original papers~\cite{Agrawal1998, Agrawal2005}. Yet, CLIQUE is not an ideal choice in a big data context. We spent a considerable amount of time on comparing RASTER and CLIQUE. There are very few widely available implementations of the latter, however, which perhaps underscores that CLIQUE is more interesting from a theoretical perspective. We consulted the implementations of the \texttt{pyclustering} package\footnote{The code repository of the package \texttt{pyclustering} is \url{https://github.com/annoviko/pyclustering} (accessed 15 April 2019).} as well as Gy{\"o}rgy Katona's implementation. Our implementation is based on the latter and we improved its performance by a factor of more than 3. It is also noticeably faster than the implementation in \texttt{pyclustering}. Our results comparing RASTER to another widely available implementation of CLIQUE in Java are documented in \ref{app-b}. Thus, it seems that the comparatively low performance of CLIQUE is not due to a substandard implementation.

In experiment 2, we have shown that RASTER and RASTER$'$ scale very well. The purpose of the additional implementation of RASTER in Rust is to show our algorithm's potential for real-world workloads. This implementation is able to process 500M points (1M clusters) in less than 50 seconds with a single thread. Unlike many other clustering algorithms, RASTER can be easily parallelized. The provided data also illustrate that there is an ideal range of values for the precision parameter, which is a measure of the resolution of the implied grid. With a precision value of 3, there is some misclassification because, for instance, two clusters in the data may be classified as one. At the other extreme, with a precision value of 5, a grid that is too fine may also lead to a reduced precision, but in that case the reason is that the elements of a cluster may be mapped to tiles that are too far apart for the used distance value. Subsequently, these scattered tiles do not meet the chosen minimum cluster size $\mu$ and are filtered out prior to the agglomeration step.

As our results of experiment 3 show, RASTER can not only be elegantly parallelized in theory; the measured performance improvements also show a substantial performance gain. Based on these results, we would expect that a further increase in the number of available CPU cores $c$ leads, for P-RASTER, to a speedup of roughly $\frac{c}{2}$ for the projection step and $c$ for the clustering step. The improvement for clustering in P-RASTER$'$ is similar, but the performance improvement for projection seems to level off. As the current picture is ambiguous, it is hard to make predictions on expected performance increases of that part of the algorithm.

We have found that with a greater precision value in P-RASTER$'$, total performance not only levels off, as it was the case with a precision value of 3, but degrades, with a precision of 4. This seems to be due to memory allocation for the data that is retained. In addition to increased memory requirements, cache synchronization issues may also play a role. Since each core on a multi-core CPU has its own cache, there is a greater need for cache synchronization. This phenomenon is referred to as parallel slowdown.

We also took a look at related work on parallelizing other clustering algorithms. The state of the art in parallelizing DBSCAN is Song et al.'s work on RP-DBSCAN, which they describe as "superfast"~\cite{song2018rp}. Yet, the achieved speedup with 10 cores is only around 2, reaching 4.4 with 40 cores. In comparison, P-RASTER achieves a parallel speedup of a factor of around 4 already with 8 cores, due to the excellent scalability inherent in the design of our algorithm. Based on our experiments, we would expect a speedup of around 20 with 40 cores. By design, RASTER can be parallelized rather effectively. Yet, parallelizing many other clustering algorithms, such as DBSCAN, is much less straightforward. Such efforts may also have a relatively low ceiling, as suggested by the empirical performance of RP-DBSCAN.

\section{Related Work}
\label{sec:related}
We compared RASTER to a number of canonical clustering algorithms of which implementations were readily available. There are other algorithms, which seem to have been of more academic interest. For reasons related to space as well as a lack of availability of implementations, we therefore briefly discuss other density-based clustering algorithms separately below. In general, even though we point out some superficial similarities between RASTER and various other algorithms, we would like to stress that none of those algorithms, as described in the respective papers, seems suitable for big data clustering.

An early approach to grid-based spatial data mining was STING~\cite{wang1997sting}. A key difference between RASTER and this algorithm is that it performs statistical queries, using distributions of attribute values. WaveCluster~\cite{sheikholeslami1998wavecluster} shares some similarities with RASTER. It can reduce the resolution of the input, which leads to output that is visually similar to RASTER clusters that are defined by its significant tiles. WaveCluster runs in linear time. However, because the computation of wavelets is costlier than the operations RASTER performs, its empirical runtime is presumably worse.

The projection step of RASTER makes use of a standard concept in data processing: assigning values to buckets. Similar corresponding projections are a feature in other algorithms as well. For instance, Baker and Valleron~\cite{baker2014open} present a solution to a problem in spatial epidemiology whose initial step seems similar to the projection step performed by RASTER. For solving their problem, it is sufficient to count data points in squares of a grid. The approach taken by GRPDBSCAN, discovered by Darong and Peng~\cite{darong2012grid} is similar to the aforementioned work in this regard, but chooses the approach of RASTER$'$ by retaining the projected input points. Unfortunately, their description seems rather vague. Because it includes only an illustration instead of an implementation or specification, we cannot evaluate how similar their algorithm is to ours. In both of those papers there is no discussion of using a variable scaling factor, however, to arbitrarily adjust the granularity of the grid and the resulting clusters. Lastly, there is some similarity between our algorithm, when focusing only on the hub identification problem, and blob detection in image analysis~\cite{danker1981blob}. A direct application of that method to finding clusters would arguably require very dense cluster centers, which could be achieved by projecting points to fewer tiles. Also, blob detection is computationally more expensive than RASTER. Furthermore, unlike blob-detection algorithms, RASTER is viable as a general-purpose clustering algorithm, as illustrated in~\ref{app-a}.

\section{Future Work}
\label{sec:future}


Xiaoyun et al.~introduced GMDBSCAN~\cite{xiaoyun2008gmdbscan}, a DBSCAN-variant that is able to detect clusters of different densities. The inability to detect such clusters is shared between DBSCAN and RASTER. For more general purpose-applications, it may be worth investigating a similar approach for RASTER as well. A starting point is an adaptive distance parameter for significant tiles. While we mentioned that the distance metric $\delta$ can be arbitrarily chosen --- we picked the eight immediate neighbors of a tile --- one could not only choose different Manhattan or Chebyshev distances, but arbitrary adaptive values. On the other hand, by fine-tuning the precision parameter value, very similar results could be achieved, possibly with multiple passes over the same input, which would make it possible to detect clusters of varying densities.

RASTER does not distinguish between significant tiles. Yet, the number of points projected to different significant tiles can differ widely. Some tiles could have a very high count of points, while others barely reach the specified threshold value. One obvious modification would be to use those counts for visualization purposes, for instance by rendering tiles with fewer points in a lighter color tone and those with more points in a darker one. However, there are more sophisticated approaches for utilizing relative densities of tiles. Thus, one could consider an adaptive approach to \textsc{RASTER} clustering, for instance by subdividing such tiles into smaller segments, with the goal of determining more accurate cluster shapes. This idea is related to adaptive mesh refinement, suggested by Liao et.\ al~\cite{liao2004grid}. A related idea is to change the behavior of \textsc{RASTER} when detecting a large number of adjacent tiles that have not been classified as significant. This may prompt a coarsening of the grid size for that part of the input space.

For practical use, it may be worthwhile to add a contextual relaxation value $\epsilon$ for the threshold value of significant tiles. For instance, in the vicinity of several significant tiles, a neighboring tile with $\tau - \epsilon$ data points may be considered part of the agglomeration, in particular if it has multiple significant tiles as neighbors. A key aspect of RASTER is that it returns clusters of significant tiles that cover most of the area of a dense cluster. It may be interesting to not return such clusters at all but instead compute the ellipse of the least area that includes all tiles of a cluster. In particular for larger clusters, this would lead to an even more memory-efficient output. This also has a fitting correspondence to RASTER$'$ as the ellipse could be computed based on all the points projected to the tiles in the final set of clusters.

RASTER is highly suited to clustering data batches, but we have also implemented a variant of RASTER for data streams (S-RASTER). In an upcoming paper, we intend to discuss this variant and show how it compares with other algorithms for clustering evolving data streams~\cite{ulm2019s}. Lastly, P-RASTER was discussed as an algorithm running on a many-core CPU. It should be straightforward to adopt it to execution on the cloud, similar to the experiments described in the RP-DBSCAN paper~\cite{song2018rp}. We would expect it to scale well as slices can be processed in parallel. Unlike RP-DBSCAN, P-RASTER on the cloud would not need to duplicate any data, implying better scalability in terms of required memory as well. That being said, P-RASTER is already able to process terabytes of data in a very reasonable amount of time on a single workstation. This means that adapting P-RASTER to be executed on a data center like Microsoft Azure, Amazon Web Services, or Google Cloud would be hard to justify in the foreseeable future.

Lastly, while a detailed evaluation of the parallel RASTER variant P-RASTER was not the primary focus of this paper, it could be interesting to also carry out comparisons to other parallel clustering algorithms. As our evaluation shows, sequential RASTER is already highly performant. We have also shown that P-RASTER scales well on multicore machines. Given that the RASTER family of  algorithms is able to process terabytes of data, which very comfortably covers the practical use case for which we developed it, a detailed comparison with parallel clustering algorithms would arguably be of limited utility. Another hindrance to such a project is that there is no readily available suite of implementations of such algorithms.


\section{Conclusion}
\label{sec:conclusion}
We hope to have shown that RASTER is an excellent density-based clustering algorithm with an outstanding single-threaded performance. For our particular problem, it outperforms standard implementations of existing clustering algorithms. On top, unlike many other clustering algorithms, RASTER can be effectively parallelized to make use of many-core CPUs. Based on our experiments, we expect RASTER to scale linearly with the number of available CPU cores. As the input space of RASTER can be partitioned without data duplication and without affecting clustering quality, it would be straightforward to adopt our algorithm for cloud-scale computations, albeit that may not be necessary, considering the performance that can be achieved on a regular workstation.

While we focused on the hub identification problem, RASTER can also be used for solving general-purpose density-based clustering problems. The parameters RASTER uses, i.e. the threshold value $\tau$ for significant tiles, the distance metric $\delta$, and the minimum cluster size $\mu$, are all intuitive. In~\ref{app-a}, we illustrate on standard data sets how our algorithm can deliver very good results with minimal parameter tweaking. RASTER is particularly suited to low-dimensional data. Overall, we have shown that RASTER is of great practical value due to its very fast sequential and parallel performance and its suitability for general-purpose clustering problems.

\section*{Acknowledgments}
This research was supported by the project \emph{Fleet telematics big data analytics for vehicle usage modeling and analysis (FUMA)} in the funding program \emph{FFI: Strategic Vehicle Research and Innovation (DNR 2016-02207)}, which is administered by VINNOVA, the Swedish Government Agency for Innovation Systems. Initial multi-core benchmarks were performed on resources at the Chalmers Centre for Computational Science and Engineering (C3SE), provided by the Swedish National Infrastructure for Computing (SNIC).




%
%


\bibliographystyle{elsarticle-num} 
\bibliography{RASTER-refs}

@inproceedings{ester1996density,
  title={A density-based algorithm for discovering clusters in large spatial databases with noise},
  author={Ester, Martin and Kriegel, Hans-Peter and Sander, J{\"o}rg and Xu, Xiaowei and others},
  booktitle={SIGKDD Conference on Knowledge Discovery and Data Mining},
  volume={96},
  number={34},
  pages={226--231},
  year={1996}
}

@inproceedings{wang1997sting,
  title={STING: A statistical information grid approach to spatial data mining},
  author={Wang, Wei and Yang, Jiong and Muntz, Richard and others},
  booktitle={International Conference on Very Large Data Bases VLDB},
  volume={97},
  pages={186--195},
  year={1997}
}

@article{baker2014open,
  title={An open source software for fast grid-based data-mining in spatial epidemiology ({FGBASE})},
  author={Baker, David M and Valleron, Alain-Jacques},
  journal={International Journal of Health Geographics},
  volume={13},
  number={1},
  pages={46},
  year={2014},
  publisher={BioMed Central}
}

@inproceedings{xiaoyun2008gmdbscan,
  title={{GMDBSCAN}: multi-density {DBSCAN} cluster based on grid},
  author={Xiaoyun, Chen and Yufang, Min and Yan, Zhao and Ping, Wang},
  booktitle={e-Business Engineering, 2008. ICEBE'08. IEEE International Conference on},
  pages={780--783},
  year={2008},
  organization={IEEE}
}

@article{comaniciu2002mean,
  title={Mean shift: A robust approach toward feature space analysis},
  author={Comaniciu, Dorin and Meer, Peter},
  journal={IEEE Transactions on Pattern Analysis \& Machine Intelligence},
  number={5},
  pages={603--619},
  year={2002},
  publisher={IEEE}
}

@inproceedings{diggelenGPS,
  title={The World's first GPS MOOC and Worldwide Laboratory using Smartphones},
  author={van Diggelen, Frank and Enge, Per},
  booktitle={Proceedings of the 28th International Technical Meeting of The Satellite Division of the Institute of Navigation (ION GNSS+ 2015)},
  pages={361--369},
  year={2015},
  organization={ION}
}

@article{yang1993survey,
  title={A survey of fuzzy clustering},
  author={Yang, M-S},
  journal={Mathematical and Computer Modelling},
  volume={18},
  number={11},
  pages={1--16},
  year={1993},
  publisher={Elsevier}
}

@article{pham2001spatial,
  title={Spatial models for fuzzy clustering},
  author={Pham, Dzung L},
  journal={Computer Vision and Image Understanding},
  volume={84},
  number={2},
  pages={285--297},
  year={2001},
  publisher={Elsevier}
}

@article{darong2012grid,
  title={Grid-based {DBSCAN} algorithm with referential parameters},
  author={Darong, Huang and Peng, Wang},
  journal={Physics Procedia},
  volume={24},
  pages={1166--1170},
  year={2012},
  publisher={Elsevier}
}

@inproceedings{macqueen1967some,
  title={Some methods for classification and analysis of multivariate observations},
  author={MacQueen, James and others},
  booktitle={Proceedings of the Fifth Berkeley Symposium on Mathematical Statistics and Probability},
  volume={1},
  number={14},
  pages={281--297},
  year={1967},
  organization={Oakland, CA, USA.}
}

@inproceedings{liao2004grid,
  title={A grid-based clustering algorithm using adaptive mesh refinement},
  author={Liao, Wei-keng and Liu, Ying and Choudhary, Alok},
  booktitle={7th Workshop on Mining Scientific and Engineering Datasets of SIAM International Conference on Data Mining},
  pages={61--69},
  year={2004}
}

@inproceedings{Inaba1994,
 author = {Inaba, Mary and Katoh, Naoki and Imai, Hiroshi},
 title = {Applications of Weighted Voronoi Diagrams and Randomization to Variance-based K-clustering: (Extended Abstract)},
 booktitle = {Proceedings of the Tenth Annual Symposium on Computational Geometry},
 series = {SCG '94},
 year = {1994},
 isbn = {0-89791-648-4},
 location = {Stony Brook, New York, USA},
 pages = {332--339},
 numpages = {8},
 acmid = {178042},
 publisher = {ACM},
 address = {New York, NY, USA},
}

@article{capo2017efficient,
  title={An efficient approximation to the k-means clustering for massive data},
  author={Cap{\'o}, Marco and P{\'e}rez, Aritz and Lozano, Jose A},
  journal={Knowledge-Based Systems},
  volume={117},
  pages={56--69},
  year={2017},
  publisher={Elsevier}
}

@inproceedings{bachem2016fast,
  title={Fast and provably good seedings for k-means},
  author={Bachem, Olivier and Lucic, Mario and Hassani, Hamed and Krause, Andreas},
  booktitle={Advances in Neural Information Processing Systems},
  pages={55--63},
  year={2016}
}

@article{lloyd1982least,
  title={Least squares quantization in PCM},
  author={Lloyd, Stuart},
  journal={IEEE Transactions on Information Theory},
  volume={28},
  number={2},
  pages={129--137},
  year={1982},
  publisher={IEEE}
}

@inproceedings{kumar2005linear,
  title={Linear time algorithms for clustering problems in any dimensions},
  author={Kumar, Amit and Sabharwal, Yogish and Sen, Sandeep},
  booktitle={International Colloquium on Automata, Languages, and Programming},
  pages={1374--1385},
  year={2005},
  organization={Springer}
}

@article{kumar2010linear,
  title={Linear-time approximation schemes for clustering problems in any dimensions},
  author={Kumar, Amit and Sabharwal, Yogish and Sen, Sandeep},
  journal={Journal of the ACM (JACM)},
  volume={57},
  number={2},
  pages={5},
  year={2010},
  publisher={ACM}
}

@article{fahad2014survey,
  title={A survey of clustering algorithms for big data: Taxonomy and empirical analysis},
  author={Fahad, Adil and Alshatri, Najlaa and Tari, Zahir and Alamri, Abdullah and Khalil, Ibrahim and Zomaya, Albert Y and Foufou, Sebti and Bouras, Abdelaziz},
  journal={IEEE Transactions on Emerging Topics in Computing},
  volume={2},
  number={3},
  pages={267--279},
  year={2014},
  publisher={IEEE}
}

@article{hathaway2006extending,
  title={Extending fuzzy and probabilistic clustering to very large data sets},
  author={Hathaway, Richard J and Bezdek, James C},
  journal={Computational Statistics \& Data Analysis},
  volume={51},
  number={1},
  pages={215--234},
  year={2006},
  publisher={Elsevier}
}

@article{danker1981blob,
  title={Blob detection by relaxation},
  author={Danker, Alan J and Rosenfeld, Azriel},
  journal={IEEE Transactions on Pattern Analysis and Machine Intelligence},
  number={1},
  pages={79--92},
  year={1981},
  publisher={IEEE}
}

@inproceedings{sheikholeslami1998wavecluster,
  title={Wavecluster: A multi-resolution clustering approach for very large spatial databases},
  author={Sheikholeslami, Gholamhosein and Chatterjee, Surojit and Zhang, Aidong},
  booktitle={International Conference on Very Large Data Bases VLDB},
  volume={98},
  pages={428--439},
  year={1998}
}

@inproceedings{shirkhorshidi2014big,
  title={Big data clustering: a review},
  author={Shirkhorshidi, Ali Seyed and Aghabozorgi, Saeed and Wah, Teh Ying and Herawan, Tutut},
  booktitle={International Conference on Computational Science and Its Applications},
  pages={707--720},
  year={2014},
  organization={Springer}
}

@inproceedings{sculley2010web,
  title={Web-scale k-means clustering},
  author={Sculley, David},
  booktitle={Proceedings of the 19th International Conference on World Wide Web},
  pages={1177--1178},
  year={2010},
  organization={ACM}
}

@article{voorhees1986implementing,
  title={Implementing agglomerative hierarchic clustering algorithms for use in document retrieval},
  author={Voorhees, Ellen M},
  journal={Information Processing \& Management},
  volume={22},
  number={6},
  pages={465--476},
  year={1986},
  publisher={Elsevier}
}

@article{shi2000normalized,
  title={Normalized cuts and image segmentation},
  author={Shi, Jianbo and Malik, Jitendra},
  journal={Departmental Papers (CIS)},
  pages={107},
  year={2000}
}

@inproceedings{stella2003multiclass,
  title={Multiclass spectral clustering},
  author={Stella, X Yu and Shi, Jianbo},
  booktitle={Ninth IEEE International Conference on Computer Vision (ICCV 2003)},
  pages={313},
  year={2003},
  organization={IEEE}
}

@article{frey2007clustering,
  title={Clustering by passing messages between data points},
  author={Frey, Brendan J and Dueck, Delbert},
  journal={science},
  volume={315},
  number={5814},
  pages={972--976},
  year={2007},
  publisher={American Association for the Advancement of Science}
}

@article{ward1963hierarchical,
  title={Hierarchical grouping to optimize an objective function},
  author={Ward Jr, Joe H},
  journal={Journal of the American Statistical Association},
  volume={58},
  number={301},
  pages={236--244},
  year={1963},
  publisher={Taylor \& Francis Group}
}

@inproceedings{zhang1996birch,
  title={BIRCH: an efficient data clustering method for very large databases},
  author={Zhang, Tian and Ramakrishnan, Raghu and Livny, Miron},
  booktitle={ACM Sigmod Record},
  volume={25},
  number={2},
  pages={103--114},
  year={1996},
  organization={ACM}
}

@inproceedings{Reynolds2009GaussianMM,
  title={Gaussian Mixture Models},
  author={Douglas A. Reynolds},
  booktitle={Encyclopedia of Biometrics},
  pages={827--832},
  year={2009}
}

@Article{Agrawal2005,
author="Agrawal, Rakesh
and Gehrke, Johannes
and Gunopulos, Dimitrios
and Raghavan, Prabhakar",
title="Automatic Subspace Clustering of High Dimensional Data",
journal="Data Mining and Knowledge Discovery",
year="2005",
month="Jul",
day="01",
volume="11",
number="1",
pages="5--33",
issn="1573-756X"
}

@inproceedings{Agrawal1998,
 author = {Agrawal, Rakesh and Gehrke, Johannes and Gunopulos, Dimitrios and Raghavan, Prabhakar},
 title = {Automatic Subspace Clustering of High Dimensional Data for Data Mining Applications},
 booktitle = {Proceedings of the 1998 ACM SIGMOD International Conference on Management of Data},
 series = {SIGMOD '98},
 year = {1998},
 isbn = {0-89791-995-5},
 location = {Seattle, Washington, USA},
 pages = {94--105},
 numpages = {12},
 acmid = {276314},
 publisher = {ACM},
 address = {New York, NY, USA}
}

@article{ROUSSEEUW198753,
title = "Silhouettes: A graphical aid to the interpretation and validation of cluster analysis",
journal = "Journal of Computational and Applied Mathematics",
volume = "20",
pages = "53 - 65",
year = "1987",
issn = "0377-0427",
author = "Peter J. Rousseeuw",
keywords = "Graphical display, cluster analysis, clustering validity, classification",
}

@book{even2011graph,
  title={Graph {A}lgorithms},
  author={Even, Shimon},
  year={2011},
  publisher={Cambridge University Press}
}

@inproceedings{song2018rp,
  title={{RP-DBSCAN}: A Superfast Parallel {DBSCAN} Algorithm Based on Random Partitioning},
  author={Song, Hwanjun and Lee, Jae-Gil},
  booktitle={Proceedings of the 2018 International Conference on Management of Data},
  pages={1173--1187},
  year={2018},
  organization={ACM}
}

@InProceedings{rasterLOD,
author="Ulm, Gregor
and Gustavsson, Emil
and Jirstrand, Mats",
editor="Nicosia, Giuseppe
and Pardalos, Panos
and Giuffrida, Giovanni
and Umeton, Renato",
title="Contraction {C}lustering ({RASTER})",
booktitle="Machine Learning, Optimization, and Big Data",
year="2018",
publisher="Springer International Publishing",
address="Cham, Switzerland",
pages="63--75"
}

@article{muller2009evaluating,
  title={Evaluating clustering in subspace projections of high dimensional data},
  author={M{\"u}ller, Emmanuel and G{\"u}nnemann, Stephan and Assent, Ira and Seidl, Thomas},
  journal={Proceedings of the VLDB Endowment},
  volume={2},
  number={1},
  pages={1270--1281},
  year={2009},
  publisher={VLDB Endowment}
}

@article{ulm2019s,
  title={S-RASTER: Contraction Clustering for Evolving Data Streams},
  author={Ulm, Gregor and Smith, Simon and Nilsson, Adrian and Gustavsson, Emil and Jirstrand, Mats},
  journal={arXiv preprint arXiv:1911.09447},
  year={2019}
}






\appendix
\section{Qualitative Comparison of Clustering Algorithms}
\label{app-a}
Arguably the most well-known qualitative comparison of prominent clustering algorithms is provided by the online documentation of \texttt{scikit-learn}.\footnote{Refer to the Section "Comparing different clustering algorithms on toy data sets" in the official documentation of \texttt{scikit-learn}, which is available at
\url{https://scikit-learn.org/0.20/auto_examples/cluster/plot_cluster_comparison.html} (Accessed April 4, 2019).} We have modified RASTER$'$ to work with that benchmark. The results are provided in Figure~\ref{fig:comparison}. The RASTER parameters we used were threshold $\tau$ = 5, minimum cluster size $\mu$ = 5, and a precision value of 0.9. The runtime metrics have to be taken with a grain of salt as our algorithm has to process the input twice to produce the desired labels. This is not an inherent flaw of RASTER but instead due to the implementation of that particular comparison. Furthermore, our implementation has not been optimized. For instance, we convert from Python to \texttt{numpy} objects and back. The standalone implementation of RASTER$'$ furthermore does not require us to return a label for each element in the provided input, in the exact same order. It does not retain all data either as all noise is simply discarded. In the provided examples, the input size is a mere 1,500 points each, with very few clusters. As we have shown earlier (cf.~Sect.~\ref{raster-results}), our algorithm shines with huge data sets and very large numbers of clusters.

\begin{sidewaysfigure*}
\centering
\includegraphics[scale=0.55, trim=0.0cm 0.0cm 0.0cm 0.4cm, clip]{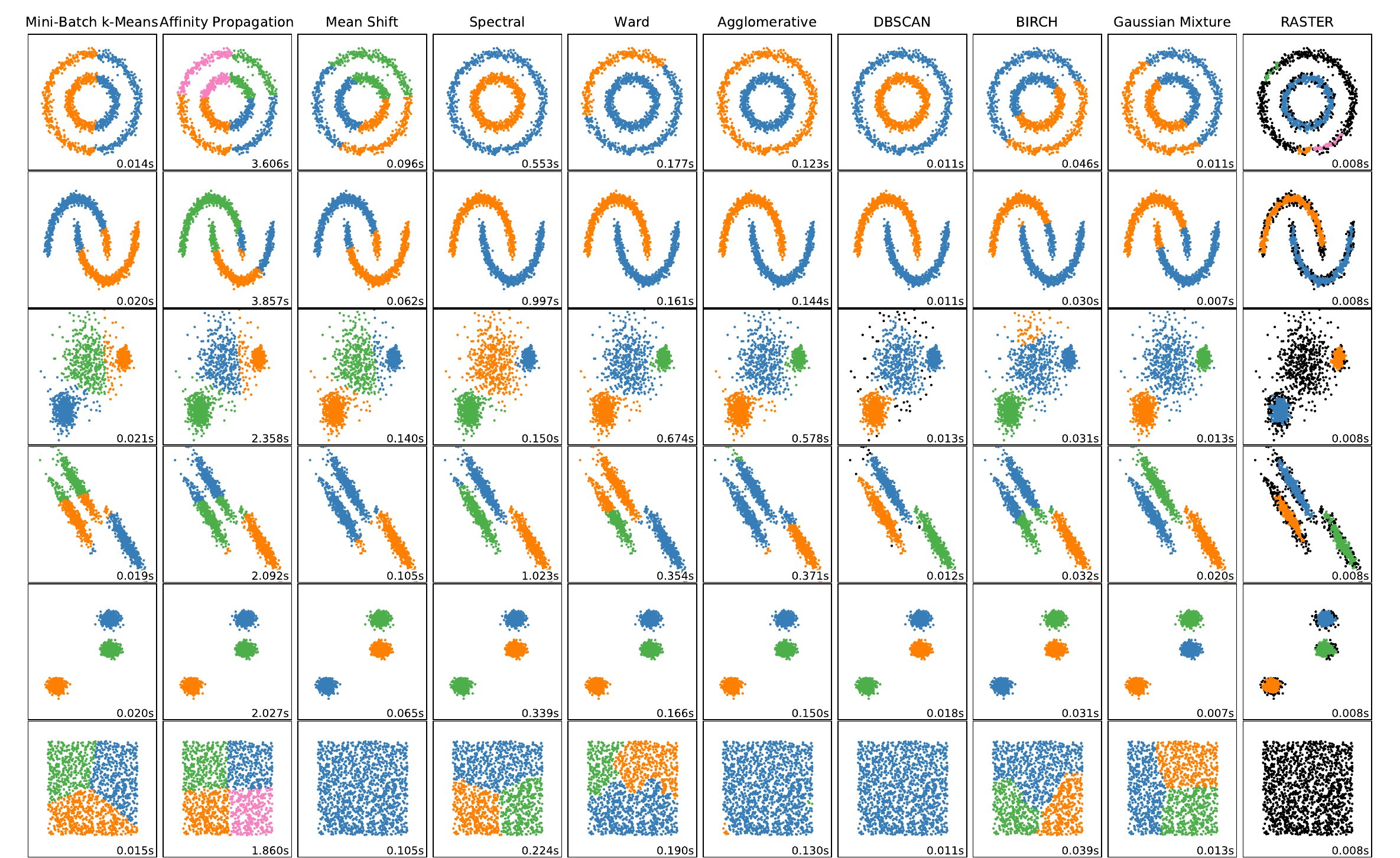}
\caption{This comparison (best viewed in color) between RASTER$\star$ and various standard clustering algorithm shows that our algorithm approximately cluster centers of tight, dense clusters very well. The third data set from the top reflects the hub identification problem where tight, dense clusters have to be found in noisy data sets. RASTER is the only algorithm in this comparison that adequately does this. Furthermore, it is the only algorithm in this comparison that does not produce any clusters with homogeneous input, as seen in the bottom image. Furthermore, the output of the first data set can be easily improved by modifying the precision parameter (cf.~Fig.~\ref{fig:rvsr}).}
\label{fig:comparison}
\end{sidewaysfigure*}

Having pointed out that this benchmark is not an ideal match for RASTER, we can nonetheless confirm that the results are in line with our expectations as our algorithm turned out to perform very well also with smaller input sizes and fewer clusters. To go through the six input data sets from top to bottom: The concentric circles in the first data set are not suited for the chosen RASTER parameters, for which the outer circle is not dense enough. The inner one, however, is properly identified. By choosing a different precision value, the outer ring can be clearly identified (cf.~Fig.~\ref{fig:rvsr}), but this would negatively affect the results of the other data sets. The arcs in the second data set are nicely separated, showing that RASTER does not require dense, circular clusters. The third one contains two denser clusters on the left and right, while the middle is less dense. RASTER is the only algorithm that considers the middle as noise. There is no ground truth provided. Yet, that is how we would interpret the GPS data we have worked with as well, which means that this data set is relevant to the hub identification problem, where tight clusters are surrounded by noise. As our example shows, RASTER ignores the less densely spread points between two dense clusters, while the other algorithms all detect three clusters. This seems to imply that, if speed and memory were not a concern, standard clustering algorithms still could not be used for the hub identification problem without preprocessing the data and filtering out noise.

The fourth and particularly the fifth data set are ideally suited for RASTER, which is shown to reliably identify hubs. Lastly, the sixth data set is worth highlighting. According to the information provided in the \texttt{scikit-learn} documentation, such homogeneous data is a "null situation for clustering", stating that there is "no good clustering" possible. Interestingly, RASTER is the only of the ten clustering algorithms in this comparison that does not perform any clustering on this data set. Instead, all data is discarded as noise. It is also noteworthy that the indicated runtime of RASTER does not depend on the number of clusters or the points per cluster. Instead, it is primarily dependent on the number of points in the input. As the input size is constant in the various data sets of this example, the resulting runtime is identical. This is in stark contrast to the other algorithms, which have sometimes widely fluctuating runtimes.

As stated earlier, adjusting the precision parameter can help improve clustering results. This is shown in Fig.~\ref{fig:rvsr}, which contrasts the output of RASTER with a precision value of 0.90 with the results of an execution with a precision of 0.73. As there is no ground truth in unsupervised learning, it is in the eye of the beholder which results are judged to be satisfactory. Yet, as this juxtaposition shows, with minor adjustments very good results can be achieved with each of the used reference data sets. We would argue that either precision value delivers very good results in data sets 2, 4, and 5. Using a precision value of 0.73 seems preferable for data set 1 as it fully clusters both rings. For data set 3, an argument could be made for either of those two parameter values, depending on whether the less densely scattered points in the middle are desired to be clustered or not. Lastly, a precision value of 0.90 delivers superior results for data set 6 as it does not cluster a homogeneous data set. In \ref{app-robustness}, the impact of changes of hyperparameter values will be discussed in more detail.

\begin{figure}
\centering
\includegraphics[scale=0.60, trim=0.0cm 0.0cm 0.0cm 0.7cm, clip]{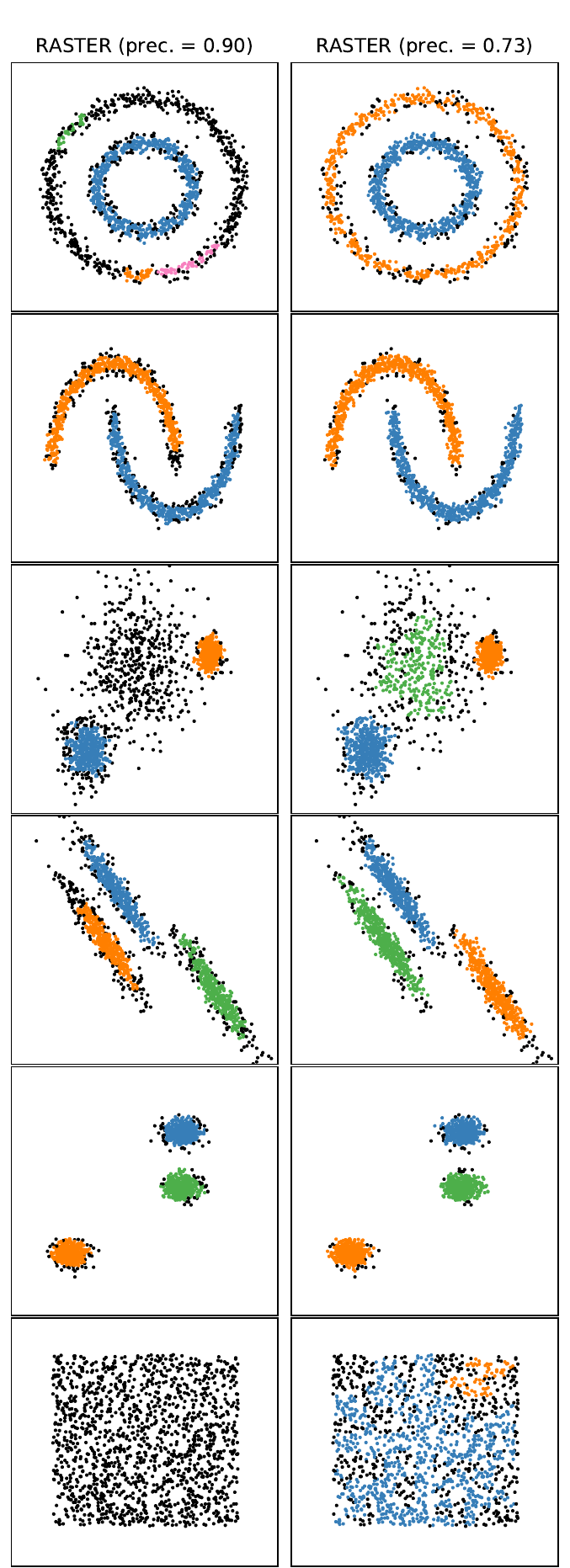}
\caption{Adjusting the precision parameter can improve clustering results (best viewed in color). With a precision of 0.90, all but the very first data set are clustered satisfactorily. Reducing the precision to 0.73 improves the results of the first data set. It is a matter of debate whether the third data set is better clustered with a precision of 0.73. Also, with a reduced precision, RASTER attempts to cluster the homogeneous data set, while it is noteworthy (cf.~Fig.~\ref{fig:comparison}) that RASTER, with a precision of 0.90, is the only algorithm in that comparison that does not cluster this data set.}
\label{fig:rvsr}
\end{figure}

\section{Clustering Quality and Hyperparameter Values}
\label{app-robustness}
RASTER takes into account various hyperparameters. Referring to our specification in Alg.~\ref{alg:raster}, these are precision \emph{prec}, threshold~$\tau$, distance~$\delta$, and minimum cluster size~$\mu$. The distance parameter is not exposed in the provided implementation, which only considers directly adjacent tiles. This approach was sufficient for the experiments described in Sect.~\ref{raster-eval}. Suitable hyperparameter values need to be found by visual inspection, as is also the case with the other clustering algorithms we benchmarked in the performance comparison in Sect.~\ref{raster-eval} as well as the qualitative comparative analysis in \ref{app-a}. In Fig.~\ref{fig:robustness} the effect of successive hyperparameter modification is illustrated. The biggest immediate effect is due to the precision parameter. Thus, the first four columns show a step-wise change until a value is found that shows good results. In contrast, in Fig.~\ref{fig:rvsr}, which is discussed in \ref{app-a}, the immediate steps are omitted.

Once a suitable hyperparameter value for the precision input has been found, further adjustments can be made. In the fifth and sixth column of Fig.~\ref{fig:robustness} the threshold parameter value was increased, which arguably did not improve the results. Alternatively, the seventh and eighth column of the same figure increase the value for the minimum cluster size. In the seventh column, this makes it possible to exclude the outer circle of the first image, which may or may not be practically relevant. In the eighth column, the third image shows that increasing the minimum cluster size leads to the exclusion of the sparse collection of points in the middle. Again, depending on the use case, this may be an improvement. Lastly, the ninth and tenth columns show the results of increasing the threshold and minimum-cluster-size parameters simultaneously. As these images show, as would be expected, the number of identified clusters declines.

The ten examples in Fig.~\ref{fig:robustness} serve as an illustration. Given the number of hyperparameters, their possible values, and the potentially limitless variety of the input data, a more extensive exploration would quickly enter the realm of a combinatorial explosion. Nonetheless, our approach should give a good idea of the applicability of RASTER to a variety of input data as well as show how its hyperparameter values can be effectively adjusted in practice. In many cases, similar to ours, it may only be necessary to adjust the precision value.

\section{Performance Comparison between CLIQUE, RASTER, and RASTER$'$}
\label{app-b}
As there is a superficial similarity between one step of CLIQUE and the neighborhood-lookup step of RASTER, we were particularly interested in comparing the performance of these two algorithms on the hub identification problem. In Sect.~\ref{raster-results} we show how a Python implementation of CLIQUE compares against RASTER. However, we also came across a Java implementation of CLIQUE as part of the OpenSubspace library~\cite{muller2009evaluating}.\footnote{The repository associated with that paper is no longer available online. However, the relevant Java packages were retrieved from version 1.0.4 of the R library \texttt{subspace}, which makes use of those Java libraries.} In order to perform a comparative benchmark, we implemented RASTER and RASTER$'$ in Java and pitted them against that implementation of CLIQUE. In Table~\ref{tab:clique} we show the results, which illustrate that the comparatively substandard performance of CLIQUE was not just due to the Python implementations we used earlier. In Java, the difference is likewise rather substantial. Further, the bigger the input, the wider the gap between CLIQUE and RASTER gets. With an input of 50K points or 100 clusters, RASTER is 300 times faster than CLIQUE. Increasing the input size to 500K points or 1K cluster widens the performance difference to well over 3,000 times.

\begin{table}[]
\centering
\begin{tabular}{@{}lrrr@{}}
\toprule
Size & \textbf{CLIQUE} & \textbf{RASTER} & \textbf{RASTER$'$} \\
\midrule
$10^1$ &   0.012 & $<$ 0.001 & $<$ 0.001 \\
$10^2$ &   1.498 &     0.005 &     0.008 \\
$10^3$ & 202.187 &     0.065 &     0.084 \\
\bottomrule
\end{tabular}
\caption{Performance comparison of CLIQUE, RASTER, and RASTER$'$ in Java. Runtimes are given in seconds and show how the various algorithms scale with increasing input sizes. For CLIQUE, we used a value of xi of 20, for RASTER, a precision of 3.5 was used. Not shown are the percentages of clusters identified. With the exception of CLIQUE with 500K points ($10^3$ clusters) at a 99.6 \% success rate, the algorithms detect all clusters.}
\label{tab:clique}
\end{table}

\begin{sidewaysfigure*}
\centering
\includegraphics[scale=0.41, trim=0.0cm 0.0cm 0.0cm 0.4cm, clip]{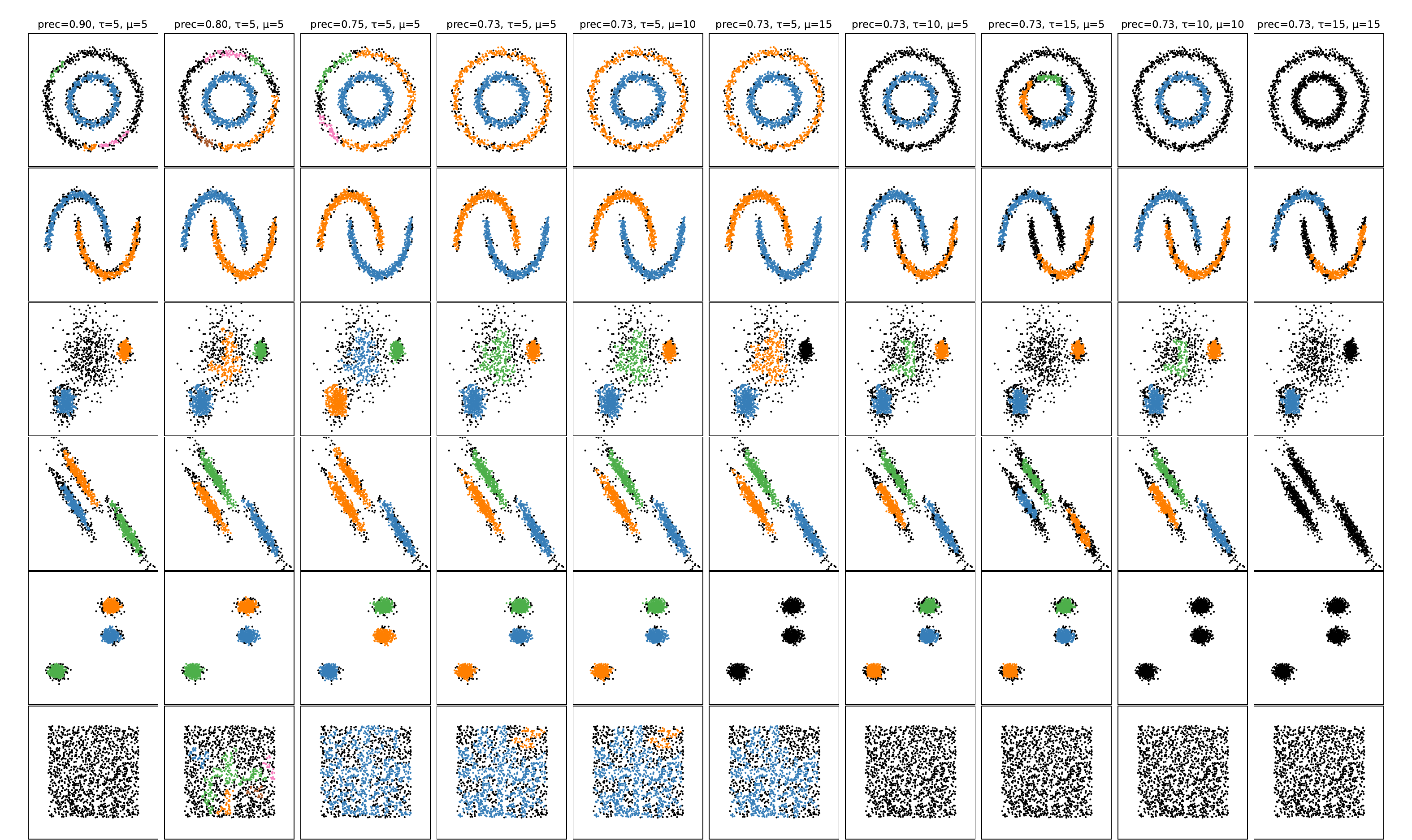}
\caption{This comparison (best viewed in color) of various applications of RASTER on varied datasets with changing hyperparameter values illustrates that clustering quality can vary significantly. The first four columns illustrate how adjusting the precision parameter leads to, arguably, increasingly better results. The determined ideal precision value of $0.73$ is held constant in columns five to ten. Columns five and six illustrate the effect of increasing the minimum cluster size while columns seven and eight show the changed output due to an increase in the minimum threshold value. Lastly, columns nine and ten show the changes that result from increasing both the minimum cluster size and the threshold value.}
\label{fig:robustness}
\end{sidewaysfigure*}

\section{Correctness Proofs for P-RASTER}
\label{app-c}
Below, we provide correctness proofs of three properties of P-RASTER that are true after the termination of the algorithm. These proofs complement Sec.~\ref{sec:p-raster}, including Alg.~\ref{alg:concurrent-clustering} and  Alg.~\ref{alg:join}. The three properties are as follows: (1) Clusters that have a neighboring significant tile in a different slice will have been joined, regardless of how often that border is crossed (\emph{multiple single-border crossings invariant}), (2) clusters having neighboring significant tiles in multiple slices will have been joined regardless how many borders they cross (\emph{consecutive border crossings invariant}), and (3) slicing can be arbitrary and does not affect the clusters in any way (\emph{arbitrary slicing invariant}).

There could potentially be a large number $O_c$ of clusters around a border which form one large cluster when joined. Figure~\ref{fig:dc1} illustrates a scenario where $O_c=3$ in the case of clusters $c_4$, $c_5$, and $c_8$. The second problem is that there could be a cluster within a slice that spans the length of the slice itself. This cluster can be joined from both its left side and its right side as can be seen in Fig.~\ref{fig:dc1} with $c_3$, $c_7$, and $c_9$. When $c_7$ is joined with the border to its right side, the new cluster needs to be carried over to the joining process around the border to its left. This leads to the two theorems below.

\newtheorem{lem}{Theorem}

\begin{lem}
Multiple single-border crossings invariant: Clusters that have neighboring significant tiles in an adjacent slice will be joined, regardless of how many neighboring significant tiles there are across that border and on which side of it they are located.
\end{lem}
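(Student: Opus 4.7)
The plan is to reduce the claim to the standard fact that breadth-first search from a vertex explores exactly the connected component containing that vertex. To make the reduction, I would first introduce an auxiliary graph $H_i$ whose vertices are the clusters in $C_{lr}$ and in which two clusters are joined by an edge iff they contain significant tiles that are within the chosen distance $\delta$ of each other across border $B_i$. Stated in these terms, the theorem becomes the assertion that any two vertices lying in the same connected component of $H_i$ end up inside the same merged cluster $c_{join}$ after the loop of Algorithm~\ref{alg:join} terminates.

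First, I would verify that the inner loop of Algorithm~\ref{alg:join}, which for each $v$ popped from $to\_visit$ ``takes neighbors to $v$ from $C_{lr}$'', returns exactly the $H_i$-neighbors of $v$. This uses that tile-level neighborhood lookup is $\mathcal{O}(1)$ via the hash table (Sect.~\ref{raster-alg}) and that a cluster contains a border-adjacent significant tile iff it belongs to $C_{lr}$ by construction of that set in Algorithm~\ref{alg:concurrent-clustering}. Second, I would establish by induction on the step number of the traversal the standard BFS invariant that at any stage the union of $to\_visit$ and the already-popped vertices coincides with the set of $H_i$-vertices reachable from the initial seed $c$ in at most that many steps (cf.~\cite{even2011graph}); hence upon termination $c_{join}$ is the union of all clusters in the connected component of $c$. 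Third, the outer loop of Algorithm~\ref{alg:join} iterates over unvisited clusters in $C_{lr}$, so every connected component of $H_i$ is seeded exactly once, and symmetric membership in the same component implies membership in the same final merged cluster.

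The main obstacle is the first step: a fully rigorous identification between ``cluster-level adjacency across border $B_i$'' as realized in code and the adjacency relation of $H_i$. In particular, the zig-zag configuration of Fig.~\ref{fig:dc1}, where a single component alternates between the left and right side of $B_i$ multiple times (e.g., clusters $c_4$, $c_5$, $c_8$), must be shown not to bias the traversal: since the Manhattan/Chebyshev neighbor relation is symmetric and the per-tile hash-table lookup has reach $\delta$ in both horizontal directions, each cross-border hop is discovered independently of which side $v$ currently lies on. Once this symmetry is in place, the zig-zag case collapses to an ordinary BFS on $H_i$ and the theorem follows from the reachability argument above, with no dependence on the number of crossings of $B_i$ or on the side where any particular neighboring significant tile is located.
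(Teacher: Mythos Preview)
Your proposal is correct and follows essentially the same approach as the paper: both reduce the \textsc{join\_border} procedure to a graph traversal on the cluster-adjacency graph and invoke the fact that such a traversal visits exactly the connected component of the seed vertex. The only cosmetic difference is that you frame the traversal as BFS with an explicit auxiliary graph $H_i$ and an inductive reachability invariant, whereas the paper simply observes in one paragraph that the loop over \textit{to\_visit} is an instance of depth-first search~\cite{even2011graph}; the BFS/DFS distinction is immaterial here since the pseudocode's iteration order over the set \textit{to\_visit} is unspecified and either discipline explores the full component.
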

 
\begin{proof}
Consider arbitrarily many clusters that are separated by the same border that would form one single cluster without that border. This implies that each cluster has at least one other cluster it is adjacent to in the neighboring slice. One of the original clusters will eventually be added to the set \textit{to\_visit}, processed, and have its neighbors determined. Each cluster is removed from that set as it is processed and its neighbors are added to it if they have not been added before. This procedure continues until there are no clusters remaining in \textit{to\_visit}. Thus, this part of the algorithm is an example of depth-first search (DFS)~\cite{even2011graph} where the searched nodes form a cluster.
\end{proof}

\begin{lem}
Consecutive border crossings invariant: Clusters having neighboring significant tiles in multiple slices will be joined regardless of how many borders they may cross.
\end{lem}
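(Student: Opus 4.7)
The plan is to induct on the number $k$ of borders that a ground-truth cluster $C^*$ crosses, with the base case $k=1$ being exactly Theorem 1. The essential mechanism powering the induction is the set $C_{inter}$ in Algorithms~\ref{alg:concurrent-clustering} and~\ref{alg:join}: whenever the cluster $c_{join}$ produced at border $B_i$ still touches the next border $B_{i-1}$, it is pushed into $C_{inter}$ and reinserted into $C_b$, so that at the following iteration the algorithm treats it as a single entity. This is precisely what carries a partially-joined cluster across successive borders as the outer loop sweeps from right to left.

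For the inductive step, consider $C^*$ crossing $k+1$ borders. Since $C^*$ is connected, these borders are consecutive, say $B_s, B_{s+1}, \ldots, B_{s+k}$, and $C^*$ occupies slices $s, s+1, \ldots, s+k+1$; the outer loop visits them in the order $B_{s+k}, B_{s+k-1}, \ldots, B_s$. Applying the inductive hypothesis to the right portion of $C^*$ lying in slices $s+1, \ldots, s+k+1$, which crosses only $k$ borders, one obtains that after the iteration processing $B_{s+1}$ this right portion has been merged into a single cluster $c'$; because $C^*$ crosses $B_s$, connectedness forces $c'$ to contain a tile adjacent to $B_s$, so $c'$ sits in $C_b$ rather than in $S_c$. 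When $B_s$ is then processed, $c'$ together with every within-slice sub-cluster of $C^*$ in slice $s$ that touches $B_s$ appears in $C_{lr}$, and a direct application of Theorem 1 to this single border merges them all into the complete $C^*$.

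The main obstacle is that $C^*$ restricted to a single slice may split into several within-slice sub-clusters, so one must check that every such sub-cluster in slice $s$ actually reaches $C_{lr}$ when $B_s$ is processed. This is secured by the observation that any within-slice sub-cluster of $C^*$ in slice $s$ must touch $B_s$: otherwise its only route to the remainder of $C^*$ would have to exit slice $s$ to the left, contradicting that $B_s$ is the leftmost border $C^*$ crosses. A secondary subtlety is justifying the recursive use of the inductive hypothesis on the right portion; this is permissible because the within-slice clustering and border-joining operations are local to the tiles they touch, so they act identically on the right portion of $C^*$ whether or not one regards that portion as a ground-truth cluster in isolation. With these two points secured, the merge at $B_s$ reduces to a single-border join covered by Theorem 1, closing the induction.
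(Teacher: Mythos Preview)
Your approach---induction on the number of borders crossed, leaning on Theorem~1 for the single-border case and on the $C_{inter}$ mechanism to carry partially-merged clusters across the sweep---is the same as the paper's, and in fact you spell out the mechanics (the right-to-left sweep, the role of $C_b$ and $C_{inter}$, the locality argument) more carefully than the paper does. The paper's own proof is terse: it treats the cluster as a chain of one sub-cluster per slice and writes the inductive step as $C_{|k+1|} = C_{|k|} \cup C_{|1|}$, deferring the issue of multiple sub-clusters per slice to a remark after the proof where it invokes Theorem~1.

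One point deserves tightening. You write that the right portion of $C^*$ (its intersection with slices $s+1,\ldots,s+k+1$) is merged into \emph{a single} cluster $c'$ after the iteration at $B_{s+1}$. That right portion need not be connected: a cluster shaped like a horizontal ``U'' with its base in slice $s$ and its two arms in slice $s+1$ has a disconnected right portion. The inductive hypothesis therefore applies to each connected component of the right portion separately, yielding possibly several merged pieces rather than one. This does not break your argument---every such component must touch $B_s$ (otherwise it could not reconnect to the rest of $C^*$), so all of them appear in $C_{lr}$ at the $B_s$ iteration, and Theorem~1 then merges them together with the slice-$s$ sub-clusters into the full $C^*$. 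But the sentence as stated overclaims, and the fix is exactly the observation you already made for the slice-$s$ side applied symmetrically to the right portion.
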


\begin{table*}[t]

\centering\ra{1.0}\begin{tabular}{@{}rrrrrrrrrrrr@{}}

\toprule
&
\multicolumn{3}{c}{prec = 3.5} & \phantom{i}&
\multicolumn{3}{c}{prec = 4}   & \phantom{i}
\\
\cmidrule{2-4}  \cmidrule{6-8}

cores 
& total & $\pi$ & $\kappa$ 
&& total & $\pi$ & $\kappa$ &

\\ \midrule

&  &
\multicolumn{5}{c}{\textbf{P-RASTER} / 50M points ($10^5$ clusters)}
\\
$1$
&  2.62 $\pm$ 0.00 & 2.05 $\pm$ 0.00 & 0.58 $\pm$ 0.00
&& 4.79 $\pm$ 0.03 & 3.67 $\pm$ 0.03 & 1.12 $\pm$ 0.01\\

$2$
&  1.43 $\pm$ 0.01 & 1.13 $\pm$ 0.01 & 0.30 $\pm$ 0.00
&& 2.57 $\pm$ 0.01 & 2.00 $\pm$ 0.01 & 0.57 $\pm$ 0.00
\\

$4$
&  1.01 $\pm$ 0.01 & 0.87 $\pm$ 0.00 & 0.14 $\pm$ 0.00
&& 1.85 $\pm$ 0.02 & 1.58 $\pm$ 0.02 & 0.27 $\pm$ 0.00
\\

$8$
&  0.85 $\pm$ 0.01 & 0.75 $\pm$ 0.01 & 0.09 $\pm$ 0.00
&& 1.59 $\pm$ 0.01 & 1.43 $\pm$ 0.01 & 0.16 $\pm$ 0.00
\\

&  &
\multicolumn{5}{c}{\textbf{P-RASTER} / 500M points ($10^6$ clusters)}
\\

$1$
&   49.33 $\pm$ 0.15 & 42.19 $\pm$ 0.14 &  7.14 $\pm$ 0.02
&&  61.17 $\pm$ 0.27 & 48.56 $\pm$ 0.23 & 12.61 $\pm$ 0.10
\\

$2$
&   26.67 $\pm$ 0.05 & 23.00 $\pm$ 0.06 & 3.66 $\pm$ 0.06
&&  35.90 $\pm$ 0.07 & 29.39 $\pm$ 0.12 & 6.52 $\pm$ 0.10
\\

$4$
&   17.01 $\pm$ 0.08 & 14.93 $\pm$ 0.07 & 2.08 $\pm$ 0.02
&&  23.22 $\pm$ 0.14 & 19.44 $\pm$ 0.12 & 3.78 $\pm$ 0.05
\\

$8$
&   12.50 $\pm$ 0.02 & 10.88 $\pm$ 0.03 & 1.61 $\pm$ 0.01
&&  19.89 $\pm$ 0.17 & 16.91 $\pm$ 0.15 & 2.98 $\pm$ 0.02
\\

\midrule

&  &
\multicolumn{5}{c}{\textbf{P-RASTER$'$} / 50M points ($10^5$ clusters)}
\\

$1$
&   6.81 $\pm$ 0.43 & 6.01 $\pm$ 0.43 & 0.80 $\pm$ 0.00
&& 11.61 $\pm$ \hphantom{0}0.98 & 9.83 $\pm$ \hphantom{0}0.93 & 1.78 $\pm$ 0.08
\\

$2$
&   5.01 $\pm$ 0.16 & 4.68 $\pm$ 0.16 & 0.33 $\pm$ 0.01
&&  9.56 $\pm$ \hphantom{0}0.40 & 8.93 $\pm$ \hphantom{0}0.40 & 0.63 $\pm$ 0.00
\\

$4$
&   4.61 $\pm$ 0.08 & 4.45 $\pm$ 0.08 & 0.16 $\pm$ 0.00
&&  8.11 $\pm$ \hphantom{0}0.25 & 7.80 $\pm$ \hphantom{0}0.25 & 0.31 $\pm$ 0.01
\\

$8$
&   4.53 $\pm$ 0.10 & 4.43 $\pm$ 0.09 & 0.10 $\pm$ 0.01
&&  8.44 $\pm$ \hphantom{0}0.22 & 8.24 $\pm$ \hphantom{0}0.22 & 0.19 $\pm$ 0.01
\\

&  &
\multicolumn{5}{c}{\textbf{P-RASTER$'$} / 500M points ($10^6$ clusters)}
\\
$1$
&   82.32 $\pm$ 4.20 &  72.73 $\pm$ 4.20 &  9.58 $\pm$ 0.01
&& 141.52 $\pm$ \hphantom{0}2.79 & 118.99 $\pm$ \hphantom{0}2.80 & 22.53 $\pm$ 0.77
\\

$2$
&   63.01 $\pm$ 2.23 &  59.03 $\pm$ 2.25 & 3.98 $\pm$ 0.02
&& 110.68 $\pm$ \hphantom{0}4.31 & 103.41 $\pm$ \hphantom{0}4.36 & 7.28 $\pm$ 0.06
\\

$4$
&   55.94 $\pm$ 2.14 &  53.70 $\pm$ 2.15 & 2.24 $\pm$ 0.01
&& 141.83 $\pm$ \hphantom{0}4.82 & 137.64 $\pm$ \hphantom{0}4.77 & 4.19 $\pm$ 0.05
\\

$8$
&   55.42 $\pm$  1.21 &  53.70 $\pm$  1.22 & 1.72 $\pm$ 0.01
&& 137.75 $\pm$ 11.32 & 134.43 $\pm$ 11.30 & 3.32 $\pm$ 0.07
\\

\bottomrule\end{tabular}\caption{Runtime of parallel implementations of RASTER and RASTER$'$ in Rust at precision levels 3.5 and 4 (5 runs each). The chosen parameter values for the threshold $\tau$ and minimum cluster size $\mu$ were 5 and 4, respectively. In addition to the total runtime, the runtime for the projection ($\pi$) and clustering ($\kappa$) steps are given. P-RASTER scales very well, while the parallel speedup of P-RASTER$'$ is more limited due to the cost incurred of having to join large hash tables at the end of the projection step.}
\label{table:par}
\end{table*}

\begin{proof}
Consider the base case of $n = 3$ adjacent clusters that are separated by two borders: Clusters $C_{\mathit{left}}$ and $C_{\mathit{middle}}$ are separated by border $B_1$; clusters $C_{\mathit{middle}}$ and $C_{\mathit{right}}$ are separated by border $B_2$. Assuming the algorithm proceeds from left to right, the first step results in a cluster $C_{\left|2\right|} = C_{\mathit{left}} \cup C_{\mathit{middle}}$ after removing $B_1$ as the clusters were adjacent and shared that very border. As $C_{\left|2\right|}$ still borders $B_2$, it remains a candidate cluster for joining. Subsequently, when processing the next border, the algorithm produces $C_{\left|3\right|} = C_{\left|2\right|} \cup C_{\mathit{right}}$, which is the final cluster. It should be easy to see that in the case of $k$ adjacent clusters that are separated by $k-1$ borders, the result is a cluster $C_{\left|k\right|}$. The inductive step is $C_{\left|k + 1\right|} = C_{\left|k\right|} \cup C_{\left|1\right|}$.
\end{proof}

The previous proof is also intuitive as clusters are processed from one vertical slice to the other and every newly encountered neighboring cluster entails the removal of one border and also means that the number of sub-clusters added increases by one. Theorems 1 and 2 work in tandem. For instance, if a cluster $C$ were to be split into more than three disjoint clusters around one border, Theorem 1 ensures that they are joined, while Theorem 2 ensures that clusters are joined across multiple borders. As a cluster remains a candidate for as long as it contains at least one tile that touches a border, it is irrelevant which case is addressed first.

P-RASTER slices the tile space vertically in a uniform manner. A practical rule of thumb is to set the number of slices to the number of available threads. The number of slices is only one aspect as it may also make sense to not slice uniformly but instead set the borders dynamically so that each slice contains approximately the same number of points, which cannot be expected when slicing the tile space uniformly with non-uniformly distributed data. What all these considerations have in common is that they depend on slicing not affecting the resulting clusters, which depends on the correctness of the next theorem.

\begin{lem}
Arbitrary slicing invariant: Slicing of the tile space can be arbitrary and does not affect the clusters in any way.
\end{lem}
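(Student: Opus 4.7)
The plan is to reduce the statement to a fact about connected components. Observe that the set $T$ of significant tiles produced after the contraction step depends only on the input points, the precision, and $\tau$; it is independent of how the tile space is later partitioned into vertical slices. Similarly, the adjacency relation induced by $\delta$ on $T$ is a purely local relation between tiles and does not refer to slice boundaries. Define the \emph{true clusters} of the instance to be the connected components of $(T,\delta)$ whose size is at least $\mu$. What needs to be shown is that, for any choice of vertical slicing, Algorithm~\ref{alg:concurrent-clustering} returns exactly the set of true clusters.

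First I would prove the \emph{soundness} direction: every cluster returned by P-RASTER is a true cluster. A cluster is added to $S_c$ in one of three places: line~9 of Alg.~\ref{alg:concurrent-clustering}, or the $C_j$ branch of Alg.~\ref{alg:join} via line~14 of Alg.~\ref{alg:concurrent-clustering}. In each case the cluster was built either by the sequential in-slice search (which is the standard DFS already justified in the RASTER time-complexity analysis) or by the border-merge loop, which only unions sets of tiles that are $\delta$-adjacent across a border. Hence every output cluster is a union of $\delta$-connected tiles and is tested against $\mu$ before being emitted, so it is a true cluster.

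Next I would prove the \emph{completeness} direction: every true cluster $C$ appears in the output. Let $S_1,\dots,S_k$ be the slices in left-to-right order and let $C_i := C \cap S_i$ be the non-empty intersections of $C$ with the slices it meets; by connectedness of $C$, the set of indices $\{i : C_i \neq \varnothing\}$ is a contiguous interval, and for every consecutive pair of such indices there must exist $\delta$-adjacent tiles across the corresponding border. If $C$ lies in a single slice, it is produced by the parallel in-slice DFS and, since $|C| \geq \mu$, emitted at line~9. Otherwise, each $C_i$ touches a border and is therefore placed into $C_b$ irrespective of its size (this is the crucial bookkeeping detail: the $\mu$ check is skipped for border-touching sub-clusters). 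Theorem~1 then ensures that, at every border the cluster crosses, all fragments on each side are merged into a single piece; Theorem~2 ensures that the resulting merged cluster is propagated forward through $C_{\mathit{inter}}$ across every successive border $C$ spans. Therefore $C$ is fully reconstructed when the last border it touches is processed and, since $|C|\geq \mu$, it is added to $C_j$ and hence to $S_c$.

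The main obstacle I expect is precisely the interaction with $\mu$: I must argue that no border-touching fragment is ever discarded before it has a chance to participate in joining, and that fragments are only size-tested once they are provably final. This is exactly what the conditional structure of lines~6--10 in Alg.~\ref{alg:concurrent-clustering} and lines~11--15 in Alg.~\ref{alg:join} enforces, so the argument reduces to a careful case analysis of when a fragment stops touching any unprocessed border. Once soundness and completeness are established, the output equals the set of true clusters, which is itself slice-independent, so any two slicings yield the same clusters, proving the invariant.
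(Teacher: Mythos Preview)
Your argument is correct and, like the paper, ultimately rests on Theorems~1 and~2. The paper's own proof is extremely terse: it simply observes that a border either cuts a cluster or it does not, and if it does, Theorems~1 and~2 guarantee reconstruction regardless of where the border sits. Your version is a genuinely more careful decomposition: you fix the slice-independent notion of ``true cluster'' as a connected component of $(T,\delta)$ of size at least $\mu$, and then prove soundness and completeness of the output against that reference. This buys you an explicit treatment of the $\mu$ threshold (the paper's proof never mentions $\mu$ and tacitly assumes fragments are not prematurely discarded), and it makes precise what ``does not affect the clusters'' means. One small wrinkle worth tightening: your $C_i = C \cap S_i$ need not be connected, so the in-slice DFS may emit several pieces rather than a single $C_i$; each such piece still touches a border (otherwise it would be a full component of $C$), and Theorem~1 is exactly what reassembles these multiple pieces across a single border, so your argument goes through once you phrase it at the level of connected components of $C_i$.
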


\begin{proof}
The width of a slice is irrelevant. It only matters if a border divides a cluster or not. If it does not, then cluster reconstruction cannot be affected. On the other hand, if it does, then divided clusters will be unified based on Theorems 1 and 2, which do not depend on the location of the borders.
\end{proof}


\section{Runtime of P-RASTER}
\label{app-d}

As part of the evaluation presented in Sec.~\ref{raster-eval}, we benchmarked the performance of P-RASTER and P-RASTER$'$ with data sets containing 50M and 500M points. In Table~\ref{table:par}, detailed results are listed, showing the total runtime but also the runtime of the separate contraction/projection and clustering steps. We find that P-RASTER scales very well. The computationally intensive projection step is sped up by a factor of up to around 4 with 8 cores. This can be explained by the sequential bottleneck of joining hash tables that result from performing the projection in parallel on the various slices of the tile space. On the other hand, the computationally less expensive clustering step scales approximately with the number of cores. With 8 cores, it is up to 8 times faster. In contrast, the performance improvements with RASTER$'$ are more modest, but not insubstantial. While the clustering step scales similarly well, the projection step improves mildly with increasing CPU cores in the case of the 50M ($10^5$ clusters) input file. However, the picture with 500M points ($10^6$ clusters) is ambivalent, showing that with an increased tile space due to a precision value of 4, performance degrades when more than two cores are used.

\end{document}